\newcommand{\specialcell}[1]{\ifmeasuring@#1\else\omit$\displaystyle#1$\ignorespaces\fi}
\DeclareMathOperator{\TV}{TV}
\DeclareMathOperator{\flip}{flip}
\DeclareMathOperator{\mix}{mix}
\DeclareMathOperator{\Ent}{Ent}
\title{From Coupling to Spectral Independence and Blackbox Comparison with the Down-Up Walk}
\author{Kuikui Liu}
\affil{\small University of Washington, \textsf{liukui17@cs.washington.edu}}
\begin{document}
\maketitle
\begin{abstract}
We show that the existence of a ``good'' coupling w.r.t. Hamming distance for any local Markov chain on a discrete product space implies rapid mixing of the Glauber dynamics in a blackbox fashion. More specifically, we only require the expected distance between successive iterates under the coupling to be summable, as opposed to being one-step contractive in the worst case. Combined with recent local-to-global arguments \cite{CLV21}, we establish asymptotically optimal lower bounds on the standard and modified log-Sobolev constants for the Glauber dynamics for sampling from spin systems on bounded-degree graphs when a curvature condition \cite{Oll09} is satisfied. To achieve this, we use Stein's method for Markov chains \cite{BN19, RR19} to show that a ``good'' coupling for a local Markov chain yields strong bounds on the spectral independence of the distribution in the sense of \cite{ALO20}.



Our primary application is to sampling proper list-colorings on bounded-degree graphs. In particular, combining the coupling for the flip dynamics given by \cite{Vig00, CDMPP19} with our techniques, we show optimal $O(n\log n)$ mixing for the Glauber dynamics for sampling proper list-colorings on any bounded-degree graph with maximum degree $\Delta$ whenever the size of the color lists are at least $\wrapp{\frac{11}{6} - \epsilon}\Delta$, where $\epsilon \approx 10^{-5}$ is small constant. While $O(n^{2})$ mixing was already known before, our approach additionally yields Chernoff-type concentration bounds for Hamming Lipschitz functions in this regime, which was not known before. Our approach is markedly different from prior works establishing spectral independence for spin systems using spatial mixing \cite{ALO20, CLV20, CGSV21, FGYZ21}, which crucially is still open in this regime for proper list-colorings.

\end{abstract}

\section{Introduction}
Given a probability distribution $\mu$ on a collection of subsets of a finite universe $U$ with a fixed size $n$, one would like to generate (approximate) samples from $\mu$. This problem is widely encountered in machine learning, statistical physics, and theoretical computer science, and encompasses many problems as special cases, including distributions over bases of matroids, discrete probabilistic graphical models, etc. A popular approach used in practice is to run a Markov chain on $\supp(\mu)$ whose stationary distribution in $\mu$. The main question then becomes how quickly does the distribution of the Markov chain converge to stationarity, i.e. does it mix rapidly?

A particularly natural Markov chain known as the ``down-up walk'' (or ``high-order walk'') originally studied in the high-dimensional expander community \cite{KM17, DK17, KO18, Opp18, AL20} has recently received a lot of attention due to applications to sampling from discrete log-concave distributions \cite{ALOV19ii, CGM19, AD20, ALOVV21} and spin systems in statistical physics \cite{AL20, ALO20, CLV20, CGSV21, FGYZ21, AASV21, CLV21}. For sampling bases of matroids, the down-up walk recovers exactly the bases exchange walk first studied in \cite{FM92, MV89}, and for sampling from discrete graphical models, the down-up walk recovers exactly the classical Glauber dynamics. One of the main insights in this area is that to prove rapid mixing of the down-up walk, it suffices to look only at pairwise correlations between elements, albeit for all conditional distributions of $\mu$ (see \cref{def:specind}). One additional advantage behind this approach is that one can prove local-to-global results not just for the spectral gap \cite{KO18, AL20}, but also for the rate of entropy decay \cite{CGM19, GM20, CLV21} and even for the rate of decay for arbitrary $f$-divergences \cite{AASV21}. This has led to asymptotically optimal mixing times for many problems \cite{CGM19, ALOVV21, CLV21} as well as Chernoff-type concentration bounds for Lipschitz functions.

However, establishing sufficiently strong bounds on pairwise correlations (or, more precisely, pairwise influences; see \cref{def:specind}), remains a challenging problem. Prior works typically rely on one of three techniques: Oppenheim's trickle-down theorem \cite{Opp18, ALOV19ii}, spatial mixing (or correlation decay) \cite{ALO20, CLV20, CGSV21, FGYZ21, CLV21}, or the absence of roots for the multivariate generating polynomial of $\mu$ in a sufficiently large region of the complex plane \cite{AASV21, CLV21b}. However, there are settings, such as proper list-colorings when the number of colors is less than twice the maximum degree of the graph, where the trickle-down theorem fails, and where spatial mixing and the existence of a nice root-free region are not known.

In this work, we show that the classical technique of (path) coupling can be used to bound these pairwise correlations. In fact, we will show that the existence of a ``good'' coupling for \textbf{any} sufficiently ``local'' dynamics with stationary distribution $\mu$ implies spectral independence for $\mu$ in the sense of \cite{ALO20}, and hence, rapid mixing of the down-up walk. Hence, one can view our main result as a blackbox comparison result between any local Markov chain and the down-up walk.

As our main concrete application, we use the variable-length path coupling devised by \cite{CDMPP19} for the flip dynamics, building off work of Vigoda \cite{Vig00}, to show $O(n\log n)$ mixing of the Glauber dynamics for sampling proper list-colorings on graphs of maximum degree $\Delta \leq O(1)$ whenever the number of available colors is at least $\wrapp{\frac{11}{6} - \epsilon}\Delta$, where $\epsilon \approx 10^{-5}$ is a small constant. This mixing time is asymptotically optimal \cite{HS07}. While $O(n^{2})$ mixing was known earlier \cite{FV07} (see also \cite{Vig00, CDMPP19}) by using a spectral gap comparison argument \cite{DS93}, our approach yields optimal bounds on the rate of entropy decay as well as Chernoff-type concentration inequalities. As mentioned earlier, strong spatial mixing and the existence of sufficiently large root-free regions are not known in this regime for proper list-colorings. Along the way, we make an additional conceptual contribution by answering the natural question of if Dobrushin-type mixing conditions imply spectral independence.

\subsection{Our Contributions}
To state our blackbox comparison result, let us first define the down-up walk for sampling from distributions over homogeneous set systems. We eschew the use of much of the terminology of high-dimensional expanders so as to simplify the exposition. Let $\mu$ be a distribution over $\binom{U}{n} = \{S \subseteq U : |S| = n\}$ for a finite set $U$ and a positive integer $n \geq 1$\footnote{One can view $\mu$ as being a distribution over the edges of a hypergraph, where $U$ is the set of vertices, and the sets in $\supp(\mu)$ are hyperedges. Since $\mu$ is supported on sets of a fixed size, one can take the downwards closure of $\supp(\mu)$ to obtain a ``pure simplicial complex'' weighted by $\mu$. These are a generalization of usual graphs which are studied in geometry, topology, and combinatorics. The notion of spectral independence (\cref{def:specind}) was derived from a high-dimensional notion of ``expansion'' for simplicial complexes known as ``local spectral expansion'' first discovered by \cite{DK17, KO18, Opp18}.}. The down-up walk is described by the following two-step process. If the current state of the chain is $S^{(t)}$, then we select the next state $S^{(t+1)}$ as follows:
\begin{enumerate}
    \item Select a uniformly random element $i \in S$.
    \item Sample a set $S \in \supp(\mu)$ satisfying $S \supseteq S^{(t)} \setminus \{i\}$ with probability proportional to $\mu(S)$ and transition to $S^{(t+1)} = S$.
\end{enumerate}
As special cases, this class of Markov chains includes the bases exchange walk for matroids \cite{MV89} and the Glauber dynamics for distributions over discrete product spaces.

We also define our notion of a ``good'' coupling and locality precisely here.
\begin{definition}[Amortized Convergent Coupling]\label{def:goodcoupling}
Fix an irreducible transition probability matrix $P$ which is reversible w.r.t. a distribution $\pi$ on a finite state space $\Omega$. Further endow $\Omega$ with a metric $d(\cdot,\cdot)$. We say a coupling of two faithful copies of the chain $(X^{(t)})_{t \geq 0}, (Y^{(t)})_{t \geq 0}$ is \textbf{$C$-amortized convergent w.r.t. $d(\cdot,\cdot)$} if the following holds for all $x,y \in \Omega$:
\begin{align*}
    \sum_{t=0}^{\infty} \E_{X^{(t)}, Y^{(t)}}\wrapb{d(X^{(t)}, Y^{(t)}) \mid \substack{X^{(0)} = x \\ Y^{(0)} = y}} \leq C \cdot d(x,y).
\end{align*}
\end{definition}
\begin{definition}[Locality of Dynamics]
Fix an irreducible transition probability matrix $P$ which is reversible w.r.t. a distribution $\pi$ on a finite state space $\Omega$. Further endow $\Omega$ with a metric $d(\cdot,\cdot)$. For a positive real number $\ell > 0$, we say the dynamics $P$ is \textbf{$\ell$-local w.r.t. $d(\cdot,\cdot)$} if
\begin{align*}
    \max_{x,y \in \Omega : P(x,y) > 0} d(x,y) \leq \ell.
\end{align*}
\end{definition}
Throughout the paper, unless stated otherwise, we work with Hamming distance. With these notions in hand, we now state our blackbox comparison result.


\begin{theorem}[Blackbox Comparison with Down-Up Walk]\label{thm:blackboxcomparison}
Let $\mu$ be a distribution on $\binom{U}{n}$, where $U$ is a finite universe and $n \geq 1$ is a positive integer. For each $A \subseteq U$ with $|A| \leq n - 2$ and $A \subseteq S$ for some $S \in \supp(\mu)$, let $P_{\mu \mid A}$ be some Markov chain on $\supp(\mu \mid A)$ with stationary distribution $\mu \mid A$. Assume the family of Markov chains $\{P_{\mu \mid A}\}$ satisfy the following:
\begin{enumerate}
    \item Locality: For some $\ell \geq 0$, $P_{\mu \mid A}$ is $\ell$-local w.r.t. Hamming distance for all $A$.
    \item Good Coupling: For some $C_{n-k} > 0$, $P_{\mu \mid A}$ admits a $C_{n-k}$-amortized convergent coupling w.r.t. Hamming distance for all $k$ and $A$ with $|A| = k$.
    \item Bounded Differences Between Chains: For some $C_{n-k}' > 0$, we have the bound
    \begin{align*}
        \max_{S \in \supp(\mu \mid (A \cup i))}\wrapc{\sum_{T \neq S} \abs{P_{\mu \mid A}(S \rightarrow T) - P_{\mu \mid (A \cup i)}(S \rightarrow T)}} \leq C_{n-k}',
    \end{align*}
    for all $k$, $i$ and $A$ with $|A| = k$.
\end{enumerate}
If $\ell \cdot C_{n-k} \cdot C_{n-k}' \leq O(1)$ for all $k$, then the down-up walk has spectral gap at least $n^{-O(1)}$. If, in addition, $\mu$ is the Gibbs distribution of a spin system (see \cref{subsec:spinsystem}) on a bounded-degree graph, then the spectral gap, standard and modified log-Sobolev constants \cref{eq:mixingconstants} for the down-up walk are all $\Omega(1/n)$.
\end{theorem}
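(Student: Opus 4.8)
The plan is to reduce both conclusions to the single statement that $\mu$ is $O(1)$-spectrally independent at every level, in the sense of \cite{ALO20}. Recall that for a feasible $A$ (meaning $A \subseteq S$ for some $S \in \supp(\mu)$) with $|A| \leq n-2$, the pairwise influence matrix $\Psi^{A}$ has entries $\Psi^{A}(i,j) = \Pr_{S \sim \mu \mid (A \cup i)}[j \in S] - \Pr_{S \sim \mu \mid A}[j \in S]$ for feasible $i \neq j$, and $\Psi^{A}(i,i) = 0$; and $\mu$ is $\eta$-spectrally independent if $\lambda_{\max}(\Psi^{A}) \leq \eta$ for all such $A$. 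I will establish the \emph{unconditional} bound $\lambda_{\max}(\Psi^{A}) \leq O(\ell \cdot C_{n-|A|} \cdot C_{n-|A|}')$. Under the hypothesis $\ell \cdot C_{n-k} \cdot C_{n-k}' \leq O(1)$ this yields $O(1)$-spectral independence, whence the spectral-gap bound $n^{-O(1)}$ follows from the local-to-global theorem for the spectral gap \cite{KO18, AL20, ALO20}; and when $\mu$ is a spin system on a bounded-degree graph --- so that, under the standing assumptions, all conditional marginals are bounded away from $0$ and $1$ --- the $\Omega(1/n)$ bounds on the spectral gap and the standard and modified log-Sobolev constants follow from the entropy- and variance-factorization results of \cite{CLV21}.

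For the core estimate, fix a feasible $A$ with $|A| = k$ and write $P := P_{\mu \mid A}$, $\pi := \mu \mid A$. The main ingredient is a Lipschitz bound on the \emph{Stein factor} of $P$: for a function $h$ on $\supp(\pi)$, let $g_h := \sum_{t \geq 0} P^{t}(h - \E_{\pi}h)$, which solves the Poisson equation $(I - P)g_h = h - \E_{\pi}h$ (the series converges absolutely since $\supp(\pi)$ is finite). Running the $C_{n-k}$-amortized convergent coupling of two faithful copies $(X^{(t)})$, $(Y^{(t)})$ of $P$ started from $S$ and $T$, and using $P^{t}h(S) - P^{t}h(T) = \E[h(X^{(t)}) - h(Y^{(t)})]$ (expectation under the coupling) together with \cref{def:goodcoupling},
\begin{align*}
\abs{g_h(S) - g_h(T)} \leq \sum_{t \geq 0} \abs{P^{t}h(S) - P^{t}h(T)} \leq \operatorname{Lip}(h) \sum_{t \geq 0} \E\wrapb{d(X^{(t)}, Y^{(t)})} \leq C_{n-k} \cdot \operatorname{Lip}(h) \cdot d(S,T).
\end{align*}
This is precisely where \emph{summability} of the coupling (rather than one-step contraction) is used, and it is all that is needed.

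Since $\lambda_{\max}(\Psi^{A})$ is at most the largest $\ell_{1}$-norm of a row of $\Psi^{A}$ (for any matrix the top eigenvalue is at most the induced $\ell_{\infty}$-operator norm), it suffices to bound that row norm by Stein's method. Fix a feasible $i$, let $\sigma_j := \operatorname{sign}(\Psi^{A}(i,j)) \in \{-1,0,1\}$ with $\sigma_i := 0$, and put $h(S) := \sum_{j \in S} \sigma_j$, which is $1$-Lipschitz w.r.t.\ Hamming distance. Expanding $\Psi^{A}$ linearly gives $\sum_{j} \abs{\Psi^{A}(i,j)} = \E_{\mu \mid (A \cup i)}[h] - \E_{\mu \mid A}[h]$. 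Writing $g := g_h$ and $P' := P_{\mu \mid (A \cup i)}$ (note $g$ is defined wherever $P'(S,\cdot)$ is supported, since $\supp(\mu \mid (A \cup i)) \subseteq \supp(\mu \mid A)$), and using $h - \E_{\pi}h = g - Pg$, the stationarity identity $\E_{\mu \mid (A \cup i)}[g - P'g] = 0$, and $\sum_{T}(P'(S,T) - P(S,T)) = 0$, we obtain
\begin{align*}
\E_{\mu \mid (A \cup i)}[h] - \E_{\mu \mid A}[h] = \E_{S \sim \mu \mid (A \cup i)}\wrapb{\sum_{T}\wrapp{P'(S,T) - P(S,T)}\wrapp{g(T) - g(S)}}.
\end{align*}
Each summand vanishes unless $P(S,T) > 0$ or $P'(S,T) > 0$, in which case $\ell$-locality of both chains gives $d(S,T) \leq \ell$ and hence $\abs{g(T) - g(S)} \leq C_{n-k} \cdot \ell$ by the previous display. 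Therefore
\begin{align*}
\sum_{j} \abs{\Psi^{A}(i,j)} \leq C_{n-k} \cdot \ell \cdot \E_{S \sim \mu \mid (A \cup i)}\wrapb{\sum_{T}\abs{P'(S,T) - P(S,T)}} \leq O(\ell \cdot C_{n-k} \cdot C_{n-k}'),
\end{align*}
the last inequality by the bounded-differences hypothesis. Maximizing over $i$, and then over $A$ at each level $k$, yields $O(1)$-spectral independence, and the cited local-to-global theorems conclude.

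The hard part is the Lipschitz bound on the Stein factor: turning the amortized/summable coupling guarantee into a genuine worst-case Lipschitz estimate on $g_h$ is what makes the comparison truly blackbox and is exactly the reason per-step contraction is not required. The remaining pieces --- choosing $h$ so that $\E_{\mu \mid (A \cup i)}[h] - \E_{\mu \mid A}[h]$ realizes an $\ell_{1}$ row norm of $\Psi^{A}$, using locality so that $g$ is only evaluated across single transitions, and bounding $\sum_{T}\abs{P'(S,T) - P(S,T)}$ by hypothesis --- are routine. The only further care needed is in the spin-system case: verifying that the marginal-boundedness and bounded-degree hypotheses of the \cite{CLV21} local-to-global theorem hold under the standing assumptions, and, for the bare spectral-gap conclusion, handling the top levels of the local-to-global recursion where only an $O(1)$ bound on the spectral independence parameter is available.
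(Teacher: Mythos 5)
Your proposal is correct and follows essentially the same route as the paper: the row sums of the influence matrices are bounded via Stein's method (Poisson equation, a Lipschitz bound on the Stein factor from the amortized coupling, stationarity of the conditional chain, locality, and the bounded-differences hypothesis), and the conclusion follows from the local-to-global theorems of \cite{AL20, ALO20} and \cite{CLV21}. The only cosmetic difference is that you realize each $\ell_{1}$ row norm with a single signed $1$-Lipschitz test function, whereas the paper proves a general marginal-difference/Wasserstein bound and then sums over the indicator functions $\mathbb{I}_{j}$.
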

We refer the reader to \cref{app:varent} and references therein for the importance of lower bounding the spectral gap, standard and modified log-Sobolev constants, and in particular, their relation to mixing and concentration.
\begin{remark}
While initially it may seem inconvenient to first build an entire family of Markov chains, one for each conditional distribution, this is very natural for many classes of distributions, in particular those which are closed under conditioning. As we will see, in practice, it is easy to obtain bounded differences between chains with $C_{n-k}' \lesssim \frac{1}{n-k}$ simply via brute force calculation. While $C_{n-k} \gtrsim n-k$ is often unavoidable, particularly for $\ell$-local chains with $\ell \leq O(1)$, we will see that in many settings, we have $C_{n-k} \lesssim n-k$ as well. If additionally our dynamics are $\ell$-local with $\ell \leq O(1)$, then the above yields a $n^{-O(1)}$ spectral gap for the down-up walk. It will turn out that our notion of $\ell$-locality can also be relaxed; see \cref{rem:localflexible}.
\end{remark}
Our primary concrete application is to sampling proper list-colorings on graphs via the Glauber dynamics, which may be realized as a down-up walk. In this setting, we compare with another useful Markov chain known as the flip dynamics. The flip dynamics is $\ell$-local w.r.t. unweighted Hamming distance with $\ell \leq 12$, and was analyzed in \cite{Vig00}, who gave a greedy coupling which is one-step contractive whenever the number of available colors is at least $\frac{11}{6}\Delta$, implying it is $C$-amortized convergent with $C \leq O(n)$. \cite{CDMPP19} tweaked the parameters of the flip dynamics slightly while preserving locality, and further constructed a variable-length coupling which contracts by a constant factor every expected $O(n)$ steps whenever the number of available colors is at least $\wrapp{\frac{11}{6} - \epsilon}\Delta$ for a small constant $\epsilon \approx 10^{-5}$. We will show this variable-length coupling is also $C$-amortized convergent with $C \leq O(n)$, and deduce optimal mixing for list-colorings in this regime.

\begin{theorem}\label{thm:coloringspecindmain}
Let $(G,\mathcal{L})$ be a list-coloring instance where $G=(V,E)$ is a graph of maximum degree $\Delta \leq O(1)$ and $\mathcal{L} = (L(v))_{v \in V}$ is a collection of color lists. Then for some absolute constant $\epsilon \approx 10^{-5}$, if $\abs{L(v)} \geq \wrapp{\frac{11}{6} - \epsilon}\Delta$ for all $v \in V$, then the uniform distribution over proper list-colorings for $(G,\mathcal{L})$ is $(\eta_{0},\dots,\eta_{n-2})$-spectrally independent where $\eta_{k} \leq O(1)$ for all $k$. Furthermore, the spectral gap, standard and modified log-Sobolev constants \cref{eq:mixingconstants} for the Glauber dynamics are all $\Omega(1/n)$, and the mixing time is $O(n\log n)$.
\end{theorem}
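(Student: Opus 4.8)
The plan is to deduce \cref{thm:coloringspecindmain} from \cref{thm:blackboxcomparison} by taking the auxiliary local Markov chains $\{P_{\mu \mid A}\}$ to be the flip dynamics of \cite{CDMPP19} (a variant of Vigoda's \cite{Vig00}), instantiated not only on the input instance $(G,\mathcal L)$ but on the whole family of list-coloring instances obtained by conditioning the uniform distribution on proper colorings. Concretely, I would (i) check that every conditional instance again satisfies the list-size hypothesis, so that the flip dynamics and its variable-length coupling are simultaneously available on all of them; (ii) verify the three hypotheses of \cref{thm:blackboxcomparison} with parameters $\ell = O(1)$, $C_{n-k} = O(n-k)$, and $C'_{n-k} = O(1/(n-k))$; and (iii) read off spectral independence $\eta_k \le O(1)$ and, since uniform proper list-colorings form the Gibbs distribution of a (hard-constraint) spin system on the bounded-degree graph $G$, conclude $\Omega(1/n)$ lower bounds on the spectral gap and the standard and modified log-Sobolev constants, hence $O(n\log n)$ mixing.

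For step (i): conditioning on a partial assignment $A$ (a pinning of some vertices to colors, extendable to a proper coloring) yields the uniform distribution on proper colorings of the induced subgraph $G[V \setminus A]$, where for each free vertex $v$ the colors used by pinned neighbors are deleted from $L(v)$. The key point is that the \emph{relative slack} $|L(v)|/\deg(v)$ only weakly increases under each pinning, since pinning one neighbor of $v$ removes at most one color from $L(v)$ but removes exactly one unit from $\deg(v)$, and $\tfrac{11}{6}-\epsilon > 1$; hence $|L(v)| \geq \wrapp{\tfrac{11}{6}-\epsilon}\deg(v)$ continues to hold at every free vertex in every conditional instance. This invariant is exactly what \cite{CDMPP19} need, so the flip dynamics $P_{\mu \mid A}$ and its variable-length path coupling are defined for all $A$; moreover the flip clusters have bounded size on every such instance, so $P_{\mu \mid A}$ is $\ell$-local w.r.t. Hamming distance with $\ell \le 12$ for all $A$, and it is irreducible and reversible w.r.t. $\mu \mid A$ by the analyses of \cite{Vig00, CDMPP19}.

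For step (ii), the two remaining hypotheses. \emph{Good coupling:} \cite{CDMPP19} construct a variable-length path coupling which, on an instance with $N$ free vertices and two colorings at Hamming distance $1$, couples so that after a random stopping time $\tau$ with $\E[\tau] = O(N)$ the expected Hamming distance has contracted by a constant factor $\gamma < 1$. I would convert this into the amortized bound of \cref{def:goodcoupling} by concatenating this coupling over successive ``epochs'' and summing the resulting geometric series of expected per-epoch contributions (each $O(N)$ times a power of $\gamma$), giving $\sum_t \E[d(X^{(t)},Y^{(t)})] = O(N)$ for adjacent initial states, and then extending from adjacent to arbitrary $x,y$ via a path-coupling argument along a Hamming geodesic, which multiplies by $d(x,y)$. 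With $|A| = k$, so $N = n-k$, this yields a $C_{n-k}$-amortized convergent coupling with $C_{n-k} = O(n-k)$. \emph{Bounded differences:} from the explicit transition rule, $P_{\mu \mid A}$ and $P_{\mu \mid (A \cup i)}$ applied to a state $S \in \supp(\mu\mid(A\cup i))$ agree except on the event that the random seed of the flip step lands in the $O(1)$-neighborhood of the single newly pinned vertex $i$, an event of probability $O(1/(n-k))$; a direct (brute-force) calculation using $\Delta = O(1)$ and bounded flip-cluster size then gives $C'_{n-k} = O(1/(n-k))$. Consequently $\ell \cdot C_{n-k} \cdot C'_{n-k} = O(1) \cdot O(n-k) \cdot O(1/(n-k)) = O(1)$ uniformly over $k$.

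Step (iii) is then immediate: \cref{thm:blackboxcomparison} gives $(\eta_0,\dots,\eta_{n-2})$-spectral independence with $\eta_k \le O(1)$, and since $\mu$ is the Gibbs distribution of a spin system on a bounded-degree graph, its second conclusion gives spectral gap, standard and modified log-Sobolev constants of the Glauber dynamics all $\Omega(1/n)$; combining $\rho_{\mathrm{MLSI}} = \Omega(1/n)$ with the standard bound $t_{\mathrm{mix}} = O\wrapp{\rho_{\mathrm{MLSI}}^{-1}\wrapp{\log\log(1/\mu_{\min}) + O(1)}}$ and $\log(1/\mu_{\min}) = O(n)$ yields $O(n\log n)$ mixing, matching \cite{HS07}. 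I expect the main obstacle to be step (ii), specifically the rigorous passage from the variable-length path coupling of \cite{CDMPP19} — designed to certify a mixing time, not a summable-distance bound — to \cref{def:goodcoupling} with the correct $O(n-k)$ dependence \emph{uniformly} across all conditional instances: the delicate points are composing variable-length couplings across epochs and across a Hamming geodesic without the stopping times interfering, and confirming that the expected epoch length and the contraction factor degrade gracefully, rather than catastrophically, when one conditions down to instances with only a few free vertices.
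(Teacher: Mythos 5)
Your proposal is correct and follows essentially the same route as the paper: instantiate \cref{thm:blackboxcomparison} with the flip dynamics, using the variable-length path coupling of \cite{CDMPP19} (converted via the path-coupling machinery of \cite{HV07} into a contractive $O(n)$-step coupling and a geometric series) to get $C_{n-k} = O(n-k)$, a brute-force locality/neighborhood argument for $C'_{n-k} = O(1/(n-k))$ with $\ell \le O(1)$, and then \cref{thm:localtoglobalspinoptimal} for the $\Omega(1/n)$ constants and $O(n\log n)$ mixing. The step you flag as delicate (passing from the variable-length coupling to the amortized bound, uniformly over conditionings) is exactly what the paper handles via \cref{thm:variabletomultistep} in the proof of \cref{lem:flipdistbound}, and your explicit check that the list-size hypothesis is preserved under pinning is a point the paper leaves implicit.
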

\begin{remark}
Our running time dependence on $\Delta$ is roughly $\Delta^{\Delta^{c}}$ for a mild constant $c$, which is rather poor. The main bottleneck in improving this dependence lies in the local-to-global result of \cite{CLV21}, although our spectral independence bound, which depends polynomially on $\Delta$, can also be significantly improved.
\end{remark}

To prove \cref{thm:blackboxcomparison}, we leverage recent local-to-global results \cite{AL20, CLV21} (see \cref{thm:localtoglobalspectralgap,thm:localtoglobalspinoptimal} for formal statements), which show that if one has sufficiently strong upper bounds on the total pairwise correlation $\sum_{j \in U} \abs{\Pr_{S \sim \mu}[j \in S] - \Pr_{S \sim \mu \mid i}[j \in S]}$, then one can deduce rapid mixing for the down-up walk \cite{KM17, DK17, KO18, Opp18}. To upper bound these correlations, we considerably generalize a result simultaneously due to \cite{BN19, RR19}, which was discovered in the context of bounding the Wasserstein $1$-distance between Ising models, or more generally, two measures on the discrete hypercube $\{-1,+1\}^{n}$. More specifically, we extend their results in several different directions:
\begin{enumerate}
    \item We replace the Glauber dynamics by any local dynamics.
    \item We allow the dynamics to admit a coupling which in a sense ``contracts on average'', as opposed to a step-wise contraction in the worst-case.
\end{enumerate}
\begin{theorem}\label{thm:specindlocalamortized}
Let $\mu$ be a distribution on $\binom{U}{n}$, where $U$ is some finite universe and $n \geq 1$ is a positive integer. Fix an arbitrary $i \in U$. Let $P_{\mu}$ (resp. $P_{\mu \mid i}$) be the transition kernel of any irreducible Markov chain on $\supp(\mu)$ (resp. $\supp(\mu \mid i)$) which is reversible w.r.t. $\mu$ (resp. $\supp(\mu \mid i)$). Suppose that $P_{\mu}$ is $\ell$-local and admits a $C$-amortized convergent coupling, both w.r.t. the Hamming metric $d_{H}(\cdot,\cdot)$. Then we have the bound
\begin{align*}
    &\sum_{j \in U} \abs{\Pr_{S \sim \mu}[j \in S] - \Pr_{S \sim \mu \mid i}[j \in S]} \leq C \cdot \ell \cdot \max_{S \in \supp(\mu \mid i)} \wrapc{\sum_{T \neq S} \abs{P_{\mu}(S \rightarrow T) - P_{\mu \mid i}(S \rightarrow T)}}.
\end{align*}
\end{theorem}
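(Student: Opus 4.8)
The plan is to use Stein's method for Markov chains to convert the distance $\abs{\Pr_{S \sim \mu}[j \in S] - \Pr_{S \sim \mu \mid i}[j \in S]}$ into a telescoping sum over the trajectories of the coupled chains, and then bound the resulting expression by the amortized convergence of the coupling times a per-step discrepancy between the two kernels. More concretely, fix the test function that counts which elements lie in a set; for each $j \in U$, let $f_j(S) = \mathbf{1}[j \in S]$. Then $\sum_j \abs{\E_\mu f_j - \E_{\mu \mid i} f_j}$ is exactly the $\ell^1$-type quantity we want, and it equals $\E_{S \sim \mu, T \sim \mu \mid i}[d_H(S,T)]$ after an optimal coupling of $\mu$ and $\mu \mid i$ (indeed, this telescopes coordinate-by-coordinate). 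So it suffices to produce \emph{some} coupling of $\mu$ and $\mu \mid i$ with small expected Hamming distance — we do not even need optimality.

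The key step is the Stein/generator identity. Since $P_\mu$ is reversible and irreducible with stationary distribution $\mu$, for any function $g$ on $\supp(\mu)$ we have $\E_{S \sim \mu \mid i}[g(S)] - \E_{S \sim \mu}[g(S)] = \E_{S \sim \mu \mid i}[(I - P_\mu)h(S)]$ where $h = \sum_{t \geq 0}(P_\mu^t g - \E_\mu g)$ is the solution to the Poisson equation $(I - P_\mu)h = g - \E_\mu g$; this $h$ is well-defined precisely because the chain mixes. Applying this with $g = f_j$, summing over $j$, and using the triangle inequality, the whole sum is controlled by $\E_{S \sim \mu \mid i}$ of $\sum_{T}\abs{P_\mu(S \to T) - P_{\mu \mid i}(S \to T)}$ weighted against the ``discrete gradient'' of $h$, i.e.\ terms of the form $\abs{h(S) - h(T)}$ for $T$ in the support of a single step. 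The $\ell$-locality of $P_\mu$ enters here: every such $T$ satisfies $d_H(S,T) \leq \ell$, so we only ever need to control $\abs{h(S) - h(T)}$ for nearby $S, T$. The amortized-convergent coupling is exactly what bounds this Lipschitz-type modulus: writing $h(S) - h(T) = \sum_{t \geq 0}(P_\mu^t g(S) - P_\mu^t g(T))$ and coupling the two chains started at $S$ and $T$, each summand is at most $\E[d_H(X^{(t)}, Y^{(t)})]$ (since $g = f_j$ changes by at most the Hamming distance, and after summing over $j$ by exactly $d_H$), and the sum over $t$ is $\leq C \cdot d_H(S,T) \leq C\ell$ by the $C$-amortized convergence hypothesis. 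Combining, $\sum_j \abs{\E_\mu f_j - \E_{\mu \mid i} f_j} \leq C \cdot \ell \cdot \max_{S \in \supp(\mu \mid i)} \sum_{T \neq S}\abs{P_\mu(S \to T) - P_{\mu \mid i}(S \to T)}$, which is the claim.

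The main obstacle — and the place where care is needed — is bookkeeping the interaction between the two different state spaces $\supp(\mu)$ and $\supp(\mu \mid i)$, and making sure the Poisson-equation solution $h$ (defined on $\supp(\mu)$) can be meaningfully evaluated on states in $\supp(\mu \mid i)$ and on the off-support states $T$ that $P_{\mu \mid i}$ can reach but $P_\mu$ formally cannot. The clean way around this is to note $\supp(\mu \mid i) \subseteq \supp(\mu)$, so $h$ is defined wherever we need it, and to observe that in the difference $P_\mu(S \to T) - P_{\mu \mid i}(S \to T)$ any $T \notin \supp(\mu)$ contributes only through $P_{\mu \mid i}$, for which one must separately argue $\abs{h(S) - h(T)}$ is still controlled — but in fact this does not arise if we set up the identity as $\E_{\mu \mid i}[(P_{\mu \mid i} - P_\mu)h]$ and expand only over $T$ reachable from $S$ under \emph{either} chain while extending $h$ by a Lipschitz extension if necessary; alternatively one shows directly that the ``bad'' $T$ contribute zero because $\mu \mid i$ is itself a conditional of $\mu$ and hence its support is compatible. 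The second technical point is ensuring absolute convergence of the series defining $h$ and the interchange of sums (over $j$, over $t$, and the expectation), which follows from the amortized bound applied with $S = T$ being at Hamming distance $0$ forcing summability, together with dominated convergence. Everything else is routine.
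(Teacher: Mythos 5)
Your proposal is correct and follows essentially the same route as the paper: solve the Poisson equation $h - P_{\mu}h = f_j - \E_{\mu}f_j$ via the series $h = \sum_{t\geq 0}(P_{\mu}^{t}f_j - \E_{\mu}f_j)$, use stationarity of $\mu \mid i$ under $P_{\mu \mid i}$ to rewrite the marginal gap as $\E_{\mu\mid i}\bigl[(P_{\mu\mid i}-P_{\mu})h\bigr]$, bound the discrete gradient $\abs{h(S)-h(T)}$ by running the coupling from $S,T$, and sum over $j$ to turn the test functions into Hamming distance, finishing with the $C$-amortized bound and $\ell$-locality (this is exactly Theorem~\ref{thm:maintechnical} plus its specialization). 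The only blemish is the remark that $\sum_j\abs{\E_\mu f_j - \E_{\mu\mid i}f_j}$ \emph{equals} the expected Hamming distance under an optimal coupling — it is only a lower bound on $W_1$ in general — but since your actual argument never uses that identity, this does not affect the proof.
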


\subsection{Main Technical Result}
We now state our main technical result, which provides the most general bound on the difference between marginals of two distributions $\mu,\nu$. We immediately use it to deduce \cref{thm:specindlocalamortized}.

\begin{theorem}[Main Technical]\label{thm:maintechnical}
Let $\mu,\nu$ be any two distributions on $2^{U}$ for a finite set $U$ with $\supp(\nu) \subseteq \supp(\mu)$, where $U$ is a finite universe and $n \geq 1$ is a positive integer. 
Further, let $P_{\mu}$ (resp. $P_{\nu}$) be the transition kernel of any Markov chain on $\supp(\mu)$ (resp. $\supp(\nu)$) with stationary distribution $\mu$ (resp. $\nu$). Assume $P_{\mu}$ is irreducible and reversible w.r.t. $\mu$. Then we may bound both $\sum_{j \in U} \abs{\Pr_{S \sim \mu}[j \in S] - \Pr_{S \sim \nu}[j \in S]}$ and the $1$-Wasserstein distance $W_{1}(\mu,\nu)$ (see \cref{def:wasserstein}) by the following quantity:
\begin{align*}
    \E_{S \sim \nu} \wrapb{\sum_{T \neq S} \abs{P_{\mu}(S \rightarrow T) - P_{\nu}(S \rightarrow T)} \cdot \sum_{t=0}^{\infty} \E_{X^{(t)},Y^{(t)}} \wrapb{d_{H}(X^{(t)}, Y^{(t)}) \mid \substack{X^{(0)} = S \\ Y^{(0)} = T}}},
\end{align*}
where $(X^{(t)},Y^{(t)})_{t=0}^{\infty}$ is a coupling of the Markov chain $P_{\mu}$.
\end{theorem}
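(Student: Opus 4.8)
The plan is to run Stein's method for Markov chains through the Poisson equation associated to $P_{\mu}$. Fix an arbitrary real-valued function $f$ on $\supp(\mu)$. Since $P_{\mu}$ is irreducible and reversible w.r.t.\ $\mu$, it is self-adjoint on $L^{2}(\mu)$ with $1$ a simple eigenvalue, so $I - P_{\mu}$ is invertible on the orthogonal complement of the constants; I let $g$ denote the unique mean-zero solution of $(P_{\mu} - I)g = f - \E_{\mu}[f]$, and represent it by the Neumann series $g = -\sum_{t \geq 0}\wrapp{P_{\mu}^{t} f - \E_{\mu}[f]}$, interpreted as an Abel sum when $P_{\mu}$ is periodic so that no aperiodicity or laziness assumption is needed.

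The first main step is a telescoping identity. Since $\nu$ is stationary for $P_{\nu}$ we have $\E_{\nu}\wrapb{(P_{\nu} - I)g} = 0$, while the Poisson equation gives $\E_{\nu}\wrapb{(P_{\mu} - I)g} = \E_{\nu}[f] - \E_{\mu}[f]$; subtracting these yields $\E_{\nu}[f] - \E_{\mu}[f] = \E_{\nu}\wrapb{(P_{\mu} - P_{\nu})g}$. For every $S \in \supp(\nu) \subseteq \supp(\mu)$, both $P_{\mu}(S, \cdot)$ and $P_{\nu}(S, \cdot)$ are probability distributions on $\supp(\mu)$, so the $S$-th row of $P_{\mu} - P_{\nu}$ sums to zero and I may recenter by $g(S)$:
\begin{align*}
\wrapp{(P_{\mu} - P_{\nu})g}(S) = \sum_{T \neq S}\wrapp{P_{\mu}(S \rightarrow T) - P_{\nu}(S \rightarrow T)}\wrapp{g(T) - g(S)}.
\end{align*}
Taking absolute values inside the expectation over $S \sim \nu$ then gives
\begin{align*}
\abs{\E_{\mu}[f] - \E_{\nu}[f]} \leq \E_{S \sim \nu}\wrapb{\sum_{T \neq S}\abs{P_{\mu}(S \rightarrow T) - P_{\nu}(S \rightarrow T)} \cdot \abs{g(T) - g(S)}}.
\end{align*}

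The second main step bounds the discrete gradient $\abs{g(T) - g(S)}$ by a coupling of $P_{\mu}$. Constant terms cancel in differences, so $g(T) - g(S) = -\sum_{t \geq 0}\wrapp{(P_{\mu}^{t}f)(T) - (P_{\mu}^{t}f)(S)}$; for any coupling $(X^{(t)}, Y^{(t)})_{t \geq 0}$ of $P_{\mu}$ with $X^{(0)} = T$, $Y^{(0)} = S$, the $t$-th summand of this series equals $\E\wrapb{f(X^{(t)}) - f(Y^{(t)})}$, whose absolute value is at most $\E\wrapb{d_{H}(X^{(t)}, Y^{(t)})}$ when $f$ is $1$-Lipschitz w.r.t.\ $d_{H}$. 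Summing and relabeling $X \leftrightarrow Y$ (using symmetry of $d_{H}$) yields
\begin{align*}
\abs{g(T) - g(S)} \leq \sum_{t = 0}^{\infty}\E_{X^{(t)}, Y^{(t)}}\wrapb{d_{H}(X^{(t)}, Y^{(t)}) \mid \substack{X^{(0)} = S \\ Y^{(0)} = T}}.
\end{align*}
Substituting back shows that the claimed quantity upper bounds $\abs{\E_{\mu}[f] - \E_{\nu}[f]}$ for \emph{every} $1$-Lipschitz $f$. Taking the supremum over such $f$ and invoking Kantorovich--Rubinstein duality (see \cref{def:wasserstein}) bounds $W_{1}(\mu,\nu)$; applying the bound instead to the $1$-Lipschitz function $f(S) = \sum_{j \in U}\sigma_{j}\mathbf{1}[j \in S]$, with $\sigma_{j} = \operatorname{sign}\wrapp{\Pr_{S \sim \mu}[j \in S] - \Pr_{S \sim \nu}[j \in S]}$, bounds $\sum_{j \in U}\abs{\Pr_{S \sim \mu}[j \in S] - \Pr_{S \sim \nu}[j \in S]}$, since by the choice of signs $\E_{\mu}[f] - \E_{\nu}[f]$ equals this sum.

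The step I expect to demand the most care is the treatment of the Poisson solution: that $g$ is well-defined (automatic here, since the state space is finite and $I - P_{\mu}$ is invertible off the constants) and that the Neumann/Abel series legitimately represents it even in the periodic case, together with the interchange of $\sum_{t}$ with $\E_{S \sim \nu}$ and with $\abs{\cdot}$. These are harmless because all the relevant series have nonnegative terms and every inequality above remains valid in $[0, \infty]$, so nothing breaks when the right-hand side diverges.
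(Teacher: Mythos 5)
Your proposal is correct and follows essentially the same route as the paper: solve the Poisson equation for $P_{\mu}$, use stationarity of $\nu$ under $P_{\nu}$ to express $\E_{\nu}f - \E_{\mu}f$ as $\E_{\nu}[(P_{\mu}-P_{\nu})g]$, recenter each row, and bound the discrete gradient $|g(T)-g(S)|$ by the coupled series $\sum_{t}\E[d_{H}(X^{(t)},Y^{(t)})]$ for $1$-Lipschitz $f$. The only deviations are cosmetic: a sign convention ($g=-h$), your more careful Abel-summation treatment of possibly periodic chains (the paper delegates well-definedness of the Poisson solution to the cited lemmas), and your use of a single sign-weighted $1$-Lipschitz function in place of the paper's summation over the indicators $\mathbb{I}_{j}$, which yield the identical bound.
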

\begin{remark}
The technical condition $\supp(\nu) \subseteq \supp(\mu)$ is just for convenience, as it ensures the transition probability $P_{\mu}(S \rightarrow T)$ also makes sense when $S \sim \nu$. This assumption is certainly satisfied in our application where $\nu$ is a conditional distribution of $\mu$.
\end{remark}
\begin{proof}[Proof of \cref{thm:specindlocalamortized}]
We use \cref{thm:maintechnical} with $\nu = \mu \mid i$ to obtain the upper bound
\begin{align*}
    &\E_{S \sim \mu \mid i} \wrapb{\sum_{T \neq S} \abs{P_{\mu}(S \rightarrow T) - P_{\mu \mid i}(S \rightarrow T)} \cdot \sum_{t=0}^{\infty} \E_{X^{(t)},Y^{(t)}} \wrapb{d_{H}(X^{(t)}, Y^{(t)}) \mid \substack{X^{(0)} = S \\ Y^{(0)} = T}}} \\
    &\leq \max_{S \in \supp(\mu \mid i)}\wrapc{\sum_{T \neq S} \abs{P_{\mu}(S \rightarrow T) - P_{\mu \mid i}(S \rightarrow T)}}\\
    &\quad\quad\quad\quad\quad \cdot\underset{(*)}{\underbrace{\E_{S \sim \mu \mid i}\wrapb{\max_{T : P_{\mu}(S \rightarrow T) > 0} \sum_{t=0}^{\infty} \E_{X^{(t)},Y^{(t)}} \wrapb{d_{H}(X^{(t)}, Y^{(t)}) \mid \substack{X^{(0)} = S \\ Y^{(0)} = T}}}}}.
\end{align*}
It suffices to bound $(*)$ by $C \cdot \ell$. Since $P_{\mu}$ admits a $C$-amortized convergent coupling, we have that
\begin{align*}
    \sum_{t=0}^{\infty} \E_{X^{(t)},Y^{(t)}} \wrapb{d_{H}(X^{(t)}, Y^{(t)}) \mid \substack{X^{(0)} = S \\ Y^{(0)} = T}} \leq C \cdot d_{H}(S,T).
\end{align*}
Hence,
\begin{align*}
    (*) \leq C \cdot \E_{S \sim \mu \mid i}\wrapb{\max_{T : P_{\mu}(S \rightarrow T) > 0} d_{H}(S,T)} \leq C \cdot \ell.
\end{align*}
\end{proof}
\begin{remark}\label{rem:localflexible}
One can see from the proof that we only needed that
\begin{align*}
    \E_{S \sim \mu\mid i}\wrapb{\max_{T : P_{\mu}(S \rightarrow T)} d_{H}(S,T)} \leq \ell,
\end{align*}
as opposed to the stronger notion of $\ell$-locality, where we have $\max_{S,T : P_{\mu}(S \rightarrow T)} d_{H}(S,T) \leq \ell$. Thus, in some sense, we only need the dynamics to make local moves ``on average''. We leave it to future work to exploit this additional flexibility.
\end{remark}

\subsection{Independent Work}
The results we obtain here were also independently discovered in \cite{BCCPSV21}.
\subsection{Acknowledgements}
The author is supported by NSF grants CCF-1552097, CCF-1907845. The author would like to thank their advisor Shayan Oveis Gharan for comments on a preliminary draft of this paper. We also thank Nima Anari and Pierre Youssef for informing us that a conjecture posed in a preliminary draft of the paper was already known to be false. We finally thank the anonymous reviewers for delivering valuable feedback on this paper.

\subsection{Organization of the Paper}
We state preliminaries on spin systems, spectral independence, etc. in \cref{sec:prelim}. We then move to the proof of our main technical result (\cref{thm:maintechnical}) in \cref{sec:stein}. In \cref{sec:productspaces}, we apply our techniques to distributions on discrete product spaces. In \cref{sec:listcolorings}, we combine our techniques with couplings constructed in prior works to obtain spectral independence for proper list-colorings.

\section{Preliminaries}\label{sec:prelim}
For a positive integer $n \geq 1$, we write $[n] = \{1,\dots,n\}$. For a distribution $\mu$ on some finite state space $\Omega$, we write $\supp(\mu) = \{x \in \Omega : \mu(x) > 0\}$ for the support of $\mu$. For a matrix $A$, we write $\norm{A}_{\infty} = \max_{i} \sum_{j} |A(i,j)|$ for the maximum absolute row sum, and if $A$ has real eigenvalues, we write $\lambda_{\max}(A)$ for the largest eigenvalue of $A$.

Throughout, we write $G=(V,E)$ for an undirected graph, and we will write $\Delta$ for the maximum degree of $G$. For a finite universe $U$ and $S \subseteq U$, we write $\mathbb{I}_{S}$ for the $\{0,1\}$-indicator function of $S$; for an element $j \in U$, we write $\mathbb{I}_{j}$ as opposed to $\mathbb{I}_{\{j\}}$. If $\mu$ is a distribution over $\binom{U}{n}$ and $S \subseteq U$, then we write $\mu \mid S$ for the conditional distribution of $\mu$ on $\binom{U \setminus S}{n - |S|}$, where $(\mu \mid S)(T) \propto \mu(S \cup T)$ whenever $S \cup T \in \supp(\mu), S \cap T = \emptyset$, and $(\mu \mid S)(T) = 0$ otherwise.


We will measure convergence of our Markov chains using total variation distance, defined as
\begin{align*}
    d_{\TV}(\mu,\nu) = \frac{1}{2} \sum_{x} \abs{\mu(x) - \nu(x)} = \sup_{S \subseteq \Omega} \abs{\mu(S) - \nu(S)}
\end{align*}
for two distributions $\mu,\nu$ on a common state space $\Omega$. We define the $\epsilon$-mixing time of a Markov chain $P$ on a state space $\Omega$ with stationary distribution $\pi$ as
\begin{align*}
    t_{\mix}(\epsilon) \overset{\defin}{=} \max_{x \in \Omega} \min\{t \geq 0 : d_{\TV}(\mathbb{I}_{x}P^{t}, \pi) \leq \epsilon\}.
\end{align*}
The mixing time of the chain is defined as $t_{\mix}(1/4)$. For the reader's convenience, in \cref{app:varent}, we record the relation between mixing, spectral gap, modified and standard log-Sobolev constants.
Finally, we also define the $1$-Wasserstein distance.
\begin{definition}[$1$-Wasserstein Distance]\label{def:wasserstein}
Given two probability measures $\mu,\nu$ on a common state space $\Omega$ endowed with a metric $d(\cdot,\cdot)$, we define the $1$-Wasserstein distance $W_{1}(\mu,\nu)$ w.r.t. $d(\cdot,\cdot)$ by
\begin{align*}
    W_{1}(\mu,\nu) = \sup_{f}\abs{\E_{\mu}f - \E_{\nu} f},
\end{align*}
where the supremum is over functions $f:\Omega \rightarrow \R$ which are $1$-Lipschitz w.r.t. $d(\cdot,\cdot)$ (i.e. $\abs{f(x) - f(y)} \leq d(x,y)$ for all $x,y \in \Omega$).
\end{definition}
\begin{remark}
By Kantorovich duality, one may equivalently define the $1$-Wasserstein distance as
\begin{align*}
    W_{1}(\mu,\nu) = \inf_{\gamma} \E_{(x,y) \sim \gamma}[d(x,y)],
\end{align*}
where the infimum is overall couplings $\gamma$ of $\mu,\nu$ on $\Omega \times \Omega$.
\end{remark}

\subsection{Spin Systems}\label{subsec:spinsystem}
Fix an undirected graph $G=(V,E)$, and a positive integer $q \geq 2$. We view $[q]$ as a collection of possible ``spin assignments'' for the vertices of $G$. We also fix a symmetric nonnegative matrix $A \in \R_{\geq0}^{q \times q}$ of ``edge interaction activities'' and a positive vector $h \in \R_{>0}^{q}$ of ``external fields''. The Gibbs distribution of the spin system on $G=(V,E)$ with parameters $A,h$ is the distribution $\mu = \mu_{G,A,h}$ over configurations $\sigma:V \rightarrow [q]$ given by
\begin{align*}
    \mu(\sigma) \propto \prod_{\{u,v\} \in E} A(\sigma(u), \sigma(v)) \prod_{v \in V} h(\sigma(v)),
\end{align*}
where the constant of proportionality is the partition function of the system, given by
\begin{align*}
    Z_{G}(A,h) = \sum_{\sigma :V \rightarrow [q]} \prod_{\{u,v\} \in E} A(\sigma(u), \sigma(v)) \prod_{v \in V} h(\sigma(v)).
\end{align*}
Many classical models in statistical physics as well as distributions over often-studied combinatorial objects on graphs may be found as special cases:
\begin{enumerate}
    \item Ising Model of Magnetism (Cuts): $A = \begin{bmatrix} e^{\beta} & 1 \\ 1 & e^{\beta} \end{bmatrix}$ and $h > 0$ is a magnetic field
    \item Hardcore Gas Model (Independent Sets): $A = \begin{bmatrix} 0 & 1 \\ 1 & 1 \end{bmatrix}$ and $h = \lambda \mathbf{1}$ where $\lambda > 0$
    \item Monomer-Dimer Model (Matchings): $A = \begin{bmatrix} 0 & 1 \\ 1 & 1 \end{bmatrix}$ and $h = \lambda \mathbf{1}$ where $\lambda > 0$ (the same parameters as the hardcore model), with the restriction that $G$ is a line graph
    \item Zero-Temperature Antiferromagnetic Potts Model (Proper Colorings): $A = J_{q} - I_{q}$ and $h = \mathbf{1}$, where $J_{q}$ is the $q \times q$ all-$1$s matrix, and $I_{q}$ is the $q \times q$ identity matrix
\end{enumerate}
We call a configuration $\sigma:V \rightarrow [q]$ feasible if $\mu(\sigma) > 0$. For instance, if $A$ has all positive entries, then all configurations $\sigma:V \rightarrow [q]$ are feasible. We call a partial configuration $\xi :S \rightarrow [q]$, where $S \subseteq V$ is a subset of vertices, a boundary condition. For such a boundary condition, we write $\mu \mid \xi$ for the conditional Gibbs distribution on $V \setminus S$ given by taking $\mu$ and conditioning on the event that the sampled $\sigma \sim \mu$ satisfies $\sigma(v) = \xi(v)$ for all $v \in S$.

\subsection{Discrete Product Spaces and Homogeneous Set Systems}\label{subsec:productspacehomogprelim}
Fix a collection of finite sets $(\Omega(v))_{v \in V}$, where $V$ is some finite index set with $|V| = n$, and consider a measure $\mu$ on the product space $\prod_{v \in V} \Omega(v)$. For instance, if $\Omega(v) = \{-1,+1\}$ for each $v \in V$, then $\mu$ is just a measure on the discrete hypercube $\{-1,+1\}^{V}$. An important subclass of examples which we will discuss at length include discrete probabilistic graphical models, where the index set $V$ is the set of vertices of a (hyper)graph, and the measure $\mu$ designed in such a way that the (hyper)edges represent local interactions between vertices of the model; see \cref{subsec:spinsystem} for more details.

As done in \cite{ALO20, CLV20, CGSV21, FGYZ21}, we view $\mu$ as a measure on $\binom{U}{n}$ where
\begin{align*}
    U = \{(v, \omega(v)) : v \in V, \omega(v) \in \Omega(v)\}.
\end{align*}
Note the usual Hamming distance $d_{H}(\cdot,\cdot)$ on $\binom{U}{n}$ is twice the usual Hamming distance typically associated with a discrete product space.

We will often write a single vertex-assignment pair $(v,c)$, where $v \in V$ and $c \in \Omega(v)$, as simply $vc$. Here, each configuration $\sigma \in \prod_{v \in V} \Omega(v)$ corresponds to the set $\{(v, \sigma(v)) : v \in V\}$. In this setting, the down-up walk is precisely the \textbf{Glauber dynamics} (or \textbf{Gibbs sampler}) for sampling from $\mu$. For each configuration $\sigma \in \prod_{v \in V} \Omega(v)$, we transition to the next configuration by the following process:
\begin{enumerate}
    \item Select a uniformly random coordinate $v \in V$.
    \item Resample $\sigma(v)$ according to $\mu$ conditioned on $\sigma_{-v}$.
\end{enumerate}
Let us make this more concrete. For each $\sigma \in \prod_{v \in V} \Omega(v)$ and $v \in V$, we write $\sigma_{-v}$ for the partial subconfiguration of $\sigma$ which only excludes $\sigma(v)$. For $c \in \Omega(v)$, we also write $\sigma_{vc}$ for the configuration obtained by flipping the coordinate of $v$ from $\sigma(v)$ to $c$. We may then write $\mu^{v}(\cdot \mid \sigma_{-v})$ for the marginal distribution of $\sigma(v)$ under $\mu$ conditioned on $\sigma_{-v}$. The transition kernel of the Glauber dynamics may then be written as
\begin{align*}
    P_{\mu}(\sigma \rightarrow \sigma_{vc}) = \frac{1}{n} \cdot \mu^{v}(c \mid \sigma_{-v}).
\end{align*}

\subsection{Spectral Independence and The Down-Up Walk}

Here, we formalize spectral independence and its connection with rapid mixing of the down-up walk. Throughout the paper, we will assume the following connectivity/nondegeneracy condition. Assuming \cref{thm:specindlocalamortized} holds, we will also give a proof of \cref{thm:blackboxcomparison}.

\textbf{Assumption:} The down-up walk for $\mu$ and all of its conditional distributions is connected. In the context of spin systems, this condition is guaranteed by ``total connectivity'' of the system parameters $(A,h)$ \cite{CLV21}. For instance, this is satisfied by all ``soft-constraint'' models (i.e. those with $A > 0$), and many ``hard-constraint'' models such as the hardcore model and the uniform distribution over proper colorings when $q \geq \Delta + 2$.

Let us now formalize spectral independence.
\begin{definition}[Pairwise Influence and Spectral Independence \cite{ALO20}]\label{def:specind}
Fix a finite universe $U$ and a positive integer $n \geq 1$. Fix a distribution $\mu$ on $\binom{U}{n} = \{S \subseteq U : |S| = n\}$. We define the \textbf{pairwise influence} of an element $i$ on another element $j$ by
\begin{align*}
    \mathcal{I}_{\mu}(i \rightarrow j) \overset{\defin}{=} \Pr_{S \sim \mu}[j \in S \mid i \in S] - \Pr_{S \sim \mu}[j \in S].
\end{align*}
We write $\mathcal{I}_{\mu} \in \R^{U \times U}$ defined by $\mathcal{I}_{\mu}(i, j) = \mathcal{I}_{\mu}(i \rightarrow j)$ for the \textbf{pairwise influence matrix of $\mu$}. We say the distribution $\mu$ is \textbf{$\eta$-spectrally independent} if $\lambda_{\max}(\mathcal{I}_{\mu}) \leq \eta + 1$. We say the distribution $\mu$ is \textbf{$(\eta_{0},\dots,\eta_{n-2})$-spectrally independent} if $\mu$ is $\eta_{0}$-spectrally independent, $\mu \mid i$ is $\eta_{1}$-spectrally independent for all $i \in U$, and so on.
\end{definition}
Often in practice, and in this paper, instead of bounding $\lambda_{\max}(\mathcal{I}_{\mu})$, we will bound
\begin{align*}
    \norm{\mathcal{I}_{\mu}}_{\infty} = \max_{i \in U} \sum_{j \in U} \abs{\mathcal{I}_{\mu}(i \rightarrow j)},
\end{align*}
which is sufficient since it is well-known that $\lambda_{\max}(A) \leq \norm{A}_{\infty}$ for any matrix $A$ with real eigenvalues. The main usefulness of spectral independence is that it implies rapid mixing of the down-up walk, while only requiring bounds on pairwise correlations. We state the main local-to-global results most relevant to us here. In the most general setting, we may deduce an inverse polynomial spectral gap from sufficiently strong spectral independence \cite{DK17, KO18, AL20}.
\begin{theorem}[\cite{AL20}, \cite{ALO20}]\label{thm:localtoglobalspectralgap}
Let $U$ be a finite universe, and $n \geq 1$ a positive integer. Let $\mu$ be a distribution on $\binom{U}{n}$ which is $(\eta_{0},\dots,\eta_{n-2})$-spectrally independence. Then the down-up walk on $\binom{U}{n}$ for sampling from $\mu$ has spectral gap at least
\begin{align*}
    \frac{1}{n} \prod_{k=0}^{n-2} \wrapp{1 - \frac{\eta_{k}}{n - k - 1}}.
\end{align*}
In particular, if $\eta_{k} \leq O(1)$ for all $k=0,\dots,n-2$, then we have $n^{O(1)}$-mixing of the down-up walk.
\end{theorem}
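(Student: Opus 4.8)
The plan is to deduce this known local-to-global theorem from two essentially independent ingredients: a linear-algebraic reinterpretation of spectral independence as a bound on the ``local walks'' attached to the faces of the weighted simplicial complex underlying $\mu$, and the local-to-global induction of \cite{AL20} (refining \cite{KO18, DK17}), which turns such local bounds into a spectral gap bound for the top-level down-up walk. First I would fix the graded setup: for $0 \le k \le n$ let $\mu_{k}$ be the distribution on $\binom{U}{k}$ with $\mu_{k}(T) \propto \sum_{S \in \supp(\mu),\, S \supseteq T} \mu(S)$, so that $\mu_{n} = \mu$; let $U_{k} \colon \R^{\binom{U}{k}} \to \R^{\binom{U}{k+1}}$ be the ``up'' averaging operator and $D_{k+1} \colon \R^{\binom{U}{k+1}} \to \R^{\binom{U}{k}}$ the ``down'' averaging operator, mutually adjoint with respect to the $\mu_{k}$- and $\mu_{k+1}$-weighted inner products; and observe that the down-up walk on $\binom{U}{n}$ is exactly the reversible Markov kernel $D_{n} U_{n-1}$. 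For a face $\tau$ with $|\tau| = k \le n-2$, the link distribution is $\mu \mid \tau$ on $\binom{U \setminus \tau}{n-k}$, and its $1$-skeleton walk $P_{1}^{\tau}$ is the reversible chain on singletons $i \in U \setminus \tau$ with $P_{1}^{\tau}(i,j) = \frac{1}{n-k-1}\Pr_{S \sim \mu \mid \tau}[j \in S \mid i \in S]$ for $i \neq j$ and $P_{1}^{\tau}(i,i) = 0$; put $\lambda_{k} = \max_{|\tau| = k} \lambda_{2}(P_{1}^{\tau})$.

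\emph{Step 1: spectral independence as local-walk eigenvalues.} Fix $\tau$ with $|\tau| = k$, write $\nu = \mu \mid \tau$ and $m = n-k$, and let $\pi(i) = \frac{1}{m}\Pr_{\nu}[i \in S]$ be the stationary distribution of $P_{1}^{\tau}$. Comparing entries verifies the operator identity $\mathcal{I}_{\nu} = (m-1)\,P_{1}^{\tau} - m\,\mathbf{1}\pi^{\top} + I$: off the diagonal, $\mathcal{I}_{\nu}(i,j) = \Pr_{\nu}[j \in S \mid i \in S] - \Pr_{\nu}[j \in S] = (m-1)P_{1}^{\tau}(i,j) - m\pi(j)$, and on the diagonal, $\mathcal{I}_{\nu}(i,i) = 1 - \Pr_{\nu}[i \in S] = 1 - m\pi(i)$. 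Since $P_{1}^{\tau}$ is reversible with respect to $\pi$, the rank-one term $\mathbf{1}\pi^{\top}$ is the $\pi$-orthogonal projection onto the constants, so on the $\pi$-orthogonal complement of $\mathbf{1}$ one has $\mathcal{I}_{\nu} = (m-1)P_{1}^{\tau} + I$, while $\mathcal{I}_{\nu}\mathbf{1} = 0$; hence $\lambda_{\max}(\mathcal{I}_{\nu}) = \max\{0,\, 1 + (m-1)\lambda_{2}(P_{1}^{\tau})\}$. Thus $\eta_{k}$-spectral independence of $\mu \mid \tau$ is equivalent to $\lambda_{2}(P_{1}^{\tau}) \le \frac{\eta_{k}}{n-k-1}$, and so $(\eta_{0},\dots,\eta_{n-2})$-spectral independence of $\mu$ gives $\lambda_{k} \le \frac{\eta_{k}}{n-k-1}$ for all $0 \le k \le n-2$.

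\emph{Step 2: the local-to-global induction.} This is the technical heart, and the step I expect to be the main obstacle. The engine is Garland's method / Oppenheim's decomposition: in the weighted inner products, one shows that the Dirichlet form of the level-$(k{+}1)$ down-up walk $D_{k+1}U_{k}$ decomposes as an explicit convex combination of (i) the Dirichlet form of the level-$k$ down-up walk $D_{k}U_{k-1}$ lifted through $U_{k-1}$, and (ii) the average over faces $\tau$ with $|\tau| = k-1$ of the Dirichlet forms of the local walks $P_{1}^{\tau}$, together with the rescaling and self-loop weights ($\frac{k}{k+1}$, $\frac{1}{k}$, and so on) that appear when passing between the lazy down-up walks and their non-lazy local counterparts. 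Lower-bounding each local Dirichlet form by its gap $1 - \lambda_{2}(P_{1}^{\tau}) \ge 1 - \lambda_{k-1}$ and tracking the normalizations produces a one-step recursion $\mathrm{gap}(D_{k+1}U_{k}) \ge (1 - \lambda_{k-1})\,\mathrm{gap}(D_{k}U_{k-1})$ (stated cleanly for the non-lazy versions, with the overall factor $\frac{1}{n}$ reinstated at the end); starting from the base case (the level-$1$ walk is the $1$-skeleton walk at the empty face, with gap $1-\lambda_{0}$) and iterating telescopes to $\mathrm{gap}(D_{n}U_{n-1}) \ge \frac{1}{n}\prod_{k=0}^{n-2}(1 - \lambda_{k})$. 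The delicate part is verifying the operator identity underlying the decomposition — that $U$, $D$, their weighted adjoints, and the self-loop corrections combine exactly as claimed — and pinning down the precise constants; the rest is bookkeeping. Substituting $\lambda_{k} \le \frac{\eta_{k}}{n-k-1}$ from Step 1 gives $\mathrm{gap}(D_{n}U_{n-1}) \ge \frac{1}{n}\prod_{k=0}^{n-2}\left(1 - \frac{\eta_{k}}{n-k-1}\right)$.

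\emph{Step 3: the $n^{O(1)}$-mixing corollary.} Suppose $\eta_{k} \le C = O(1)$ for all $k$. Replacing each $\eta_{k}$ by the minimal valid value $\eta_{k}^{*} = \lambda_{\max}(\mathcal{I}_{\mu\mid\tau})-1 = (n-k-1)\lambda_{k} \le \eta_{k}$ only improves the bound, and the standing connectivity assumption forces $\lambda_{k} < 1$, hence $\eta_{k}^{*} < n-k-1$ and every factor is positive. The factors with $n-k-1 > 2C$ contribute $\prod_{2C < j \le n-1}\left(1 - \frac{C}{j}\right) \ge \prod_{2C < j \le n-1}\frac{j-C}{j} \ge n^{-O(C)} = n^{-O(1)}$, while the remaining $O(1)$ factors near the top are positive and, for the distributions of interest here (e.g.\ spin systems on bounded-degree graphs, where each such link has $O(1)$ states), $\Omega(1)$; hence $\mathrm{gap}(D_{n}U_{n-1}) \ge n^{-O(1)}$. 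Combining with the standard bound $t_{\mix}(1/4) \le O\left(\mathrm{gap}^{-1}\log\frac{1}{\min_{S}\mu(S)}\right)$ (see \cref{app:varent}) and $\log\frac{1}{\min_{S}\mu(S)} = O(n\log|U|)$ yields $n^{O(1)}$-mixing of the down-up walk.
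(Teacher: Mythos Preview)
The paper does not prove this theorem: it is quoted with attribution to \cite{AL20,ALO20} and invoked as a black box (see the sentence immediately preceding the theorem and its use in the proof of \cref{thm:blackboxcomparison}). There is therefore no ``paper's own proof'' to compare your proposal against.

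That said, your sketch is a faithful outline of the argument in the cited sources. Step~1 is exactly the linear-algebraic translation in \cite{ALO20} between the influence matrix and the second eigenvalue of the local $1$-skeleton walk; the identity $\mathcal{I}_{\nu} = (m-1)P_{1}^{\tau} - m\,\mathbf{1}\pi^{\top} + I$ and the eigenvalue conclusion are correct. Step~2 is the \cite{AL20} induction via the Garland/Oppenheim decomposition of Dirichlet forms, and you correctly flag the operator identity there as the only genuinely technical point.

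One remark on Step~3: your hedge that the top $O(1)$ factors are $\Omega(1)$ only ``for the distributions of interest here'' is honest and necessary. In full generality, $\eta_{k} \le O(1)$ plus connectivity gives only that each factor $1-\lambda_{k}$ is strictly positive, not that it is bounded below by $n^{-O(1)}$; so the ``in particular'' clause, read literally, is not a formal consequence of the displayed product bound without some extra input about the top links. In this paper's applications (spin systems on bounded-degree graphs) the top links live on $O(1)$ vertices with $O(1)$ spins and hence have $\Omega(1)$ gap trivially, which is exactly what your parenthetical says. This is the standard resolution in the literature; just be aware that the corollary is slightly informal as stated.
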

Subject to a certain mild technical condition on the marginals of the distribution $\mu$, one can transfer spectral independence bounds to ``local entropy decay'' bounds, and then employ versions of the local-to-global result for entropy decay \cite{GM20, CLV21, AASV21}. In the setting of spin systems, one can further take advantage of the bounded-degree assumption to obtain $O(n\log n)$ mixing time upper bounds \cite{CLV21}, which are asymptotically optimal \cite{HS07}. We state the current state-of-the-art for spin systems on bounded-degree graphs here, as we will need it in our application to proper list-colorings.
\begin{theorem}[\cite{CLV21}]\label{thm:localtoglobalspinoptimal}
Let $(A,h)$ be the parameters of a spin system, and let $G=(V,E)$ be a graph with maximum degree at most $\Delta \leq O(1)$. If the Gibbs distribution $\mu = \mu_{G,A,h}$ is $(\eta_{0},\dots,\eta_{n-2})$-spectrally independent where $\eta_{k} \leq O(1)$ for all $k$, then both the standard and modified log-Sobolev constants of the Glauber dynamics (i.e. the down-up walk) for sampling from $\mu$ are at least $\Omega(1/n)$.
\end{theorem}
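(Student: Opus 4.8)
The plan is to reduce the claim to \textbf{approximate tensorization of entropy} (ATE) for $\mu$ with a dimension-free constant: a number $C = C(\Delta, q, A, h, \max_k \eta_k) = O(1)$ such that
\[
    \Ent_\mu(f) \leq \frac{C}{n} \sum_{v \in V} \E_{\sigma \sim \mu}\wrapb{\Ent_{\mu^v(\cdot \mid \sigma_{-v})}(f)}
\]
for all $f \colon \supp(\mu) \to \R_{\geq 0}$. It is standard (see \cref{app:varent} and references therein) that ATE with constant $C$ is equivalent to the modified log-Sobolev constant of the Glauber dynamics being $\geq \frac{1}{Cn} = \Omega(1/n)$, so the modified log-Sobolev half of the theorem is exactly a dimension-free bound on $C$. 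For the standard log-Sobolev constant one proves a factorization of entropy of the same shape but with the single-site conditional entropies replaced by the conditional log-Sobolev Dirichlet forms; beyond ATE the only extra ingredient needed is that each single-site conditional $\mu^v(\cdot\mid\sigma_{-v})$ has standard log-Sobolev constant bounded below by an absolute constant, which is automatic here since by the standing connectivity assumption it is a non-degenerate distribution on at most $q = O(1)$ atoms with uniformly bounded density ratios. Hence the whole theorem follows once we produce a dimension-free $C$.

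To bound $C$, I would run the local-to-global (``high-dimensional expansion'') argument for entropy \cite{GM20, CLV21, AASV21}: view $\mu$ as a weighted pure simplicial complex and telescope $\Ent_\mu(f)$ across the levels $k = 0, 1, \dots, n-2$, where at level $k$ one averages a ``local entropy'' over the links $\mu \mid A$, $|A| = k$. The telescoping shows that $C$ is controlled by the reciprocals of the \emph{local entropy-contraction rates} of the link walks: if for every $A$ with $|A| = k$ the one-step down-up walk within $\mu\mid A$ contracts entropy by a factor $1 - \delta_k$, then $C$ is governed by $\prod_k (1 - \delta_k)^{-1}$ up to $1/n$-type normalizations. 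Spectral independence supplies, for each level-$k$ link, a \emph{spectral} gap $\geq 1 - \frac{\eta_k}{n-k-1}$ (this is exactly what drives \cref{thm:localtoglobalspectralgap}), but upgrading a spectral gap to a comparable local log-Sobolev / entropy-contraction bound costs a factor $\log(1/p_k)$, where $p_k$ lower bounds the smallest positive transition probability in a level-$k$ link. For hard-constraint models $p_k$ can be exponentially small in $n-k$, and plugging this in yields only $C = \mathrm{poly}(n)$, i.e.\ modified log-Sobolev constant $n^{-O(1)}$ rather than $\Omega(1/n)$; this is precisely the gap between the general \cref{thm:localtoglobalspectralgap} and the bounded-degree statement being proved.

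The resolution, following \cite{CLV21}, exploits bounded degree so as never to pay $\log(1/p_k)$ for a piece of super-constant size. The idea is to replace single-site and pair local walks by a \textbf{block factorization}: establish a uniform block factorization of entropy, i.e.\ a single $C = O(1)$ with
\[
    \Ent_\mu(f) \leq \frac{C}{\binom{n}{\ell}} \sum_{B \subseteq V,\, |B| = \ell} \E_{\sigma\sim\mu}\wrapb{\Ent_{\mu(\cdot \mid \sigma_{V \setminus B})}(f)}
\]
simultaneously for all $1 \leq \ell \leq n$, proved by induction downward from the trivial case $\ell = n$. In each inductive step the loss is controlled by the spectral independence of $\mu$ and of its conditionals via a ``local entropy versus pairwise influence'' inequality, so that the only correlations entering are those bounded by $\lambda_{\max}(\mathcal{I}_{\mu\mid A}) - 1 \leq \eta_{|A|} = O(1)$; and, crucially, every term that appears is a conditional \emph{entropy of a block}, so no $\log(1/p_k)$ is ever incurred for a block of size $\omega(1)$. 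At the bottom of the recursion one is left with conditional distributions of the spin system on $O(1)$ vertices of a graph of degree $\leq \Delta = O(1)$ with the fixed parameters $(A,h)$; by connectivity these are non-degenerate distributions on $O(1)$ atoms, for which every functional-inequality constant — spectral gap and modified and standard log-Sobolev — is $O(1)$, which also closes the standard log-Sobolev part. Taking $\ell = 1$ in the uniform block factorization is ATE, and we conclude.

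The step I expect to be the main obstacle is forcing the downward induction to close with a \emph{dimension-free} constant. A priori, each decrement of the block size $\ell$ multiplies the factorization constant by some $1 + \epsilon_\ell$, and the $\Theta(n)$ steps would accumulate to $\prod_\ell (1+\epsilon_\ell)$, which is $\mathrm{poly}(n)$ unless the $\epsilon_\ell$ are summable; the naive loss one reads off from spectral independence at level $\ell$ is $\epsilon_\ell \asymp \eta/\ell$, whose product is $n^{\Theta(\eta)}$ — not good enough. Arranging the recursion (choice of block decomposition, and how one reuses spectral independence of the conditionals) so that the per-step loss is genuinely summable, rather than a fixed constant $>1$ or $\Theta(1/\ell)$, is where both the bounded-degree hypothesis and the full strength of spectral independence for \emph{all} conditionals with $O(1)$ parameters are essential, and is the technical heart of \cite{CLV21}; the remaining pieces — the local-to-global telescoping identity and the elementary functional inequalities on $O(1)$-size state spaces — are comparatively routine bookkeeping.
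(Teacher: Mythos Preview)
The paper does not prove \cref{thm:localtoglobalspinoptimal}; it is quoted from \cite{CLV21} and used purely as a black box in the proofs of \cref{thm:blackboxcomparison} and \cref{thm:coloringspecindmain}. There is therefore nothing in the paper to compare your argument against beyond the bare citation, and your proposal is already strictly more detailed than what the paper offers.

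As an outline of the argument in \cite{CLV21}, your sketch is accurate in its broad strokes: reduce to approximate tensorization of entropy, note that the naive link-by-link spectral-to-entropy upgrade loses $\log(1/p_k)$ per level and hence only yields $n^{-O(1)}$, and circumvent this via a uniform block factorization that never pays that price on a super-constant block. One clarification is worth making. The recursion in \cite{CLV21} does not descend through all $n$ levels of the simplicial complex to reach $O(1)$-size pieces; rather, the local-to-global entropy argument is run only from level $0$ down to level $(1-\theta)n$ for a fixed $\theta = \theta(\Delta) \in (0,1)$, incurring a total multiplicative loss of order $(1/\theta)^{O(\max_k \eta_k)} = O(1)$. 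The remaining passage from $\theta n$-block factorization to single-site tensorization is done in one shot by exploiting the Markov-random-field and bounded-degree structure: conditioning on a uniformly random set of $(1-\theta)n$ vertices leaves a spin system on the remaining $\theta n$ vertices whose interaction graph, for $\theta$ small relative to $\Delta$, shatters into components of size $O_\Delta(1)$, on which crude bounds suffice. It is this shattering step, not the local-to-global recursion itself, where both the bounded-degree hypothesis and the spin-system structure are indispensable; your last paragraph correctly flags this as the crux and defers it to \cite{CLV21}, which is precisely what the paper does as well.
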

We conclude this section with a proof of \cref{thm:blackboxcomparison}.
\begin{proof}[Proof of \cref{thm:blackboxcomparison}]
By \cref{thm:localtoglobalspectralgap,thm:localtoglobalspinoptimal}, it suffices to establish $(\eta_{0},\dots,\eta_{n-2})$-spectral independence for $\eta_{k} \leq O(1)$ for all $k$. Fix an arbitrary $A \subseteq U$ with $|A| = k \leq n-2$ and $A \subseteq S$ for some $S \in \supp(\mu)$. With the locality and coupling assumptions, \cref{thm:specindlocalamortized} shows that for each $i \in U$, the absolute row sum of $\mathcal{I}_{\mu \mid A}$ for row $i$ is upper bounded by
\begin{align*}
    \ell \cdot C_{n-k} \cdot \max_{S \in \supp(\mu \mid i)} \wrapc{\sum_{T \neq S} \abs{P_{\mu}(S \rightarrow T) - P_{\mu \mid i}(S \rightarrow T)}}.
\end{align*}
Bounded differences between chains then yields the upper bound $\lambda_{\max}(\mathcal{I}_{\mu \mid A}) \leq \norm{\mathcal{I}_{\mu \mid A}}_{\infty} \leq \ell \cdot C_{n-k} \cdot C_{n-k}'$, which is $O(1)$ by assumption. As this holds for all such $A$, it follows that $\eta_{k} \leq O(1)$.
\end{proof}

\section{Stein's Method for Markov Chains}\label{sec:stein}
Our goal in this section is to prove \cref{thm:maintechnical}. We follow \cite{BN19, RR19}, using what is known as \textbf{Stein's method for Markov chains}. Historically, Stein's method \cite{Ste72} was developed as a method to bound distances between probability measures, with the primary motivation being to prove quantitative central limit theorems. \cite{BN19, RR19} adapted this method to bound the distance between two probability measures $\mu,\nu$ on the discrete hypercube $\{-1,+1\}^{n}$ assuming the Glauber dynamics of either measure admits a contractive coupling. Our main intuition lies in viewing spectral independence (see \cref{def:specind}) as a measure of distance between different conditionings of the same distribution. Thus, one can try to apply this method to bound the spectral independence of a distribution. Let us now elucidate this method.

For a fixed function $f:\Omega \rightarrow \R$, we will construct an auxiliary function $h:\Omega \rightarrow \R$ which satisfies the \textbf{Poisson equation}
\begin{align*}
    h - P_{\mu}h = f - \E_{\mu}f.
\end{align*}
Questions concerning $\E_{\mu}f$ may then be studied by looking at $P_{\mu}h$. The following lemma constructs $h$ more explicitly.
\begin{lemma}[see Lemma 2.1 \cite{BN19}, Lemma 2.3 \cite{RR19}]\label{lem:poissonsolution}
Fix an irreducible transition probability matrix $P$ which is reversible w.r.t. a distribution $\pi$ on a finite state space $\Omega$. Let $(X^{(t)})_{t=0}^{\infty}$ be the Markov chain generated by $P$, and for a fixed function $f:\Omega \rightarrow \R$, define $h:\Omega \rightarrow \R$ by
\begin{align*}
    h(x) = \sum_{t=0}^{\infty} \E\wrapb{f(X^{(t)}) - \E_{\pi}f \mid X^{(0)} = x}.
\end{align*}
Then $h$ is well-defined as a function, and further satisfies the Poisson equation
\begin{align*}
    h - Ph = f - \E_{\pi}f.
\end{align*}
\end{lemma}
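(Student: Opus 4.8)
The plan is to verify the two claimed properties of $h$ in turn: first that the infinite sum defining $h$ converges (so $h$ is well-defined as a real-valued function), and then that it satisfies the Poisson equation $h - Ph = f - \E_\pi f$.

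For well-definedness, I would use the standard convergence theorem for finite irreducible Markov chains. Since $P$ is irreducible on a finite state space and reversible w.r.t. $\pi$ (hence $\pi$ is the unique stationary distribution), the distribution of $X^{(t)}$ converges to $\pi$; moreover, on a finite state space this convergence is geometric, so there are constants $R < \infty$ and $\rho \in (0,1)$ with $\abs{\E[f(X^{(t)}) \mid X^{(0)} = x] - \E_\pi f} \leq R \rho^t \cdot \max_y \abs{f(y)}$ uniformly in $x$. (If one wants to avoid invoking aperiodicity, one can instead work with a lazy version, or note that reversibility already forces real spectrum bounded away from $1$ on the orthogonal complement of constants; the cleanest route is to cite the $\ell^2$ or $\ell^\infty$ geometric ergodicity bound for finite reversible chains.) Summing the geometric bound over $t$ shows the series defining $h(x)$ converges absolutely for every $x$, so $h : \Omega \to \R$ is a genuine function. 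This step should be entirely routine and is the place where finiteness and irreducibility are actually used.

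For the Poisson equation, I would simply compute $Ph$ by interchanging $P$ with the (now absolutely convergent) infinite sum. Writing $g(x) = f(x) - \E_\pi f$ so that $h(x) = \sum_{t \geq 0} \E[g(X^{(t)}) \mid X^{(0)} = x]$, the Markov property gives $\E[g(X^{(t+1)}) \mid X^{(0)} = x] = \sum_y P(x,y)\, \E[g(X^{(t)}) \mid X^{(0)} = y] = (P\, \E[g(X^{(t)}) \mid X^{(0)} = \cdot])(x)$. Hence
\begin{align*}
(Ph)(x) = \sum_{t=0}^\infty \E\wrapb{g(X^{(t+1)}) \mid X^{(0)} = x} = \sum_{s=1}^\infty \E\wrapb{g(X^{(s)}) \mid X^{(0)} = x} = h(x) - g(x),
\end{align*}
where the first equality is the interchange of $P$ (a finite sum, since $\Omega$ is finite) with the convergent series, the second is reindexing $s = t+1$, and the third peels off the $s=0$ term $\E[g(X^{(0)}) \mid X^{(0)} = x] = g(x)$. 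Rearranging gives $h - Ph = g = f - \E_\pi f$, as desired.

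The main (and really only) obstacle is justifying the interchange of the operator $P$ with the infinite sum, which rests entirely on the absolute convergence established in the first step; since $\Omega$ is finite, $P$ acts as a finite linear combination, so once we know $\sum_t \abs{\E[g(X^{(t)}) \mid X^{(0)} = y]} < \infty$ for each fixed $y$, the interchange is immediate (finite sum of convergent series). Thus there is no genuine analytic subtlety here beyond invoking geometric ergodicity of finite reversible chains; I would present the argument in the two-step form above and keep the geometric-ergodicity citation brief.
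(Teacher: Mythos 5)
The paper does not actually prove this lemma itself; it is quoted from \cite{BN19, RR19}, so your argument can only be judged on its own terms. Your verification of the Poisson equation is the standard and correct computation: once the series converges absolutely, applying $P$, using the Markov property, reindexing $s=t+1$, and peeling off the $s=0$ term gives $h - Ph = f - \E_{\pi}f$ exactly as you wrote. The only substantive content is therefore the well-definedness step, and this is where your argument has a real gap.

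Irreducibility plus reversibility on a finite state space does \emph{not} give the geometric decay you invoke: a reversible irreducible chain can be periodic, in which case $\E[f(X^{(t)}) \mid X^{(0)}=x]$ need not converge to $\E_{\pi}f$ at all. Concretely, for the two-state chain that deterministically swaps the two states (reversible w.r.t.\ the uniform distribution) and $f$ the indicator of one state, the terms of the series are $\pm\tfrac12$ alternating, so $h$ is not well-defined; the lemma as stated silently needs aperiodicity (or some ergodicity/contraction hypothesis), a caveat inherited from the cited works. Your proposed repair is also not correct as stated: reversibility bounds the spectrum on the orthogonal complement of constants away from $+1$, but not away from $-1$ (the swap chain has eigenvalue $-1$), so it does not yield geometric ergodicity; and passing to a lazy version proves a Poisson equation for a different kernel, not well-definedness of the stated series for $P$. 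Two honest fixes: (i) add aperiodicity (harmless in this paper, since the down-up walk, Glauber dynamics, and flip dynamics all have holding probabilities), or (ii) prove well-definedness for Lipschitz $f$ directly from the coupling hypothesis that is anyway in force where the lemma is used: for any coupling, $\abs{\E[f(X^{(t)})\mid X^{(0)}=x] - \E_{\pi}f} \leq \E_{y \sim \pi}\,\E\wrapb{d_{H}(X^{(t)},Y^{(t)}) \mid X^{(0)}=x, Y^{(0)}=y}$, and under a $C$-amortized convergent coupling the sum over $t$ is at most $C\,\E_{y\sim\pi} d_{H}(x,y) < \infty$. Route (ii) is the one that matches how \cite{BN19, RR19} and this paper actually use the lemma, and it avoids any spectral or aperiodicity considerations.
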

With this lemma in hand, we can immediately prove \cref{thm:maintechnical}.
\begin{proof}[Proof of \cref{thm:maintechnical}]
Fix a function $f:2^{U} \rightarrow \R$, and let $h$ be the solution to the Poisson equation $h - P_{\mu}h = f - \E_{\mu}f$ given in \cref{lem:poissonsolution}.
Then since $\nu$ is stationary w.r.t. $P_{\nu}$, we have $\E_{\nu}P_{\nu}h = \E_{\nu}h$, so that using the Poisson equation yields
\begin{align*}
    \E_{\nu} (P_{\nu} - P_{\mu})h &= \E_{\nu} h - \E_{\nu}\wrapb{h - f + \E_{\mu}f} = \E_{\nu}f - \E_{\mu}f.
\end{align*}
Hence, by the Triangle Inequality, we have that $\abs{\E_{\mu}f - \E_{\nu}f} \leq \E_{\nu} \abs{(P_{\nu} - P_{\mu})h}$.

Now, let us bound $\abs{(P_{\nu} - P_{\mu})h}$ entrywise. For each $S \in \supp(\nu)$, using that $P_{\mu}(S \rightarrow S) = 1 - \sum_{T \neq S} P_{\mu}(S \rightarrow T)$ (and analogously for $P_{\nu}$),
\begin{align*}
    (P_{\nu} - P_{\mu})h(S) &= \sum_{T} (P_{\nu}(S \rightarrow T) - P_{\mu}(S \rightarrow T)) \cdot h(T) \\
    &= \sum_{T \neq S} (P_{\nu}(S \rightarrow T) - P_{\mu}(S \rightarrow T)) \cdot (h(T) - h(S)) \\
    &= \sum_{T \neq S} (P_{\nu}(S \rightarrow T) - P_{\mu}(S \rightarrow T)) \cdot \sum_{t=0}^{\infty} \E_{X^{(t)},Y^{(t)}}\wrapb{f(Y^{(t)}) - f(X^{(t)}) \mid \substack{X^{(0)} = S \\ Y^{(0)} = T}}. \tag{\cref{lem:poissonsolution}}
\end{align*}
It follows by the Triangle Inequality that
\begin{align}\label{eq:entrywisebound}
    \abs{(P_{\nu} - P_{\mu})h(S)} &\leq \sum_{T \neq S} \abs{P_{\mu}(S \rightarrow T) - P_{\nu}(S \rightarrow T)} \cdot \sum_{t=0}^{\infty} \E_{X^{(t)},Y^{(t)}} \wrapb{\abs{f(X^{(t)}) - f(Y^{(t)})} \mid \substack{X^{(0)} = S \\ Y^{(0)} = T}}.
\end{align}
Taking expectations w.r.t. $\nu$ finally yields a bound on $\abs{\E_{\mu}f - \E_{\nu}f}$. The bound on the $1$-Wasserstein distance follows immediately by taking $f$ to be an arbitrary function which is $1$-Lipschitz the metric $d_{H}(\cdot,\cdot)$. To obtain the bound on the total difference between marginals $\sum_{j \in U} \abs{\Pr_{S \sim \mu}[j \in S] - \Pr_{S \sim \nu}[j \in S]}$, we apply the above inequality to $f = \mathbb{I}_{j}$ for each $j \in U$ and sum over all $j \in U$, noting that $d_{H}(S,T) = \sum_{j \in U} \abs{\mathbb{I}_{j}(S) - \mathbb{I}_{j}(T)}$ and $\E_{\mu}f = \E_{\mu} \mathbb{I}_{j} = \Pr_{S \sim \mu}[j \in S]$ (and analogously for $\nu$).
\end{proof}

\section{Discrete Ricci Curvature on Product Spaces}\label{sec:productspaces}
In this section, we discuss applications of our results to general distributions on discrete product spaces. We show that the existence of a contractive coupling w.r.t. Hamming distance for the Glauber dynamics implies $O(1)$-spectral independence. Such a condition is known as a discrete Ricci curvature condition for the dynamics in the sense of \cite{Oll09}. This also shows that the Dobrushin uniqueness condition implies $O(1)$-spectral independence. When combined with the local-to-global result of \cite{CLV21}, we resolve an unpublished conjecture due Peres and Tetali for spin systems on bounded-degree graphs; see \cite{ELL17} and references therein for recent progress on this conjecture on general graphs. We also give an alternative proof of the $\Omega(1/n)$ lower bound on the standard and modified log-Sobolev constants of the Glauber dynamics in this setting when a Dobrushin-type condition is satisfied, recovering a result of \cite{Mar19}. 

Classical work on Dobrushin-type conditions \cite{Dob70, DS85i, DS85ii, DS87, Hay06, DGJ09} yield relatively simple and direct criteria for rapid mixing of the Glauber dynamics \cite{BD97i, BD97ii}. The main idea here is intuitively similar to that of spectral independence (although the notion of Dobrushin influence here historically precedes spectral independence): so long as some measure of ``total influence'' is small, then $\mu$ is close in some sense to a product distribution, for which rapid mixing holds. However, prior to our work, the precise relationship between Dobrushin influence and the notion of pairwise influence used in spectral independence was unclear. This is an additional conceptual contribution of our work.


\begin{definition}[Discrete Ricci Curvature \cite{Oll09}]\label{def:contractive}
Fix an irreducible transition probability matrix $P$ which is reversible w.r.t. a distribution $\pi$ on a finite state space $\Omega$. Further, endow $\Omega$ with a metric $d(\cdot,\cdot)$. We define the \textbf{discrete Ricci curvature} of the Markov chain $P$ w.r.t. the metric space $(\Omega,d)$ by
\begin{align*}
    \alpha = \inf_{x,y \in \Omega : x \neq y} \wrapc{1 - \frac{W_{1}(P(x \rightarrow \cdot), P(y \rightarrow \cdot))}{d(x,y)}},
\end{align*}
where $W_{1}(\cdot,\cdot)$ is again the $1$-Wasserstein distance w.r.t. $d(\cdot,\cdot)$. In other words, for every pair $x,y \in \Omega$, there is a coupling of the transitions $P(x \rightarrow \cdot), P(y \rightarrow \cdot)$ such that the expected distance $d(\cdot,\cdot)$ under the coupling contracts by a $(1-\alpha)$-multiplicative factor. In this case, we will say $P$ admits a \textbf{$(1-\alpha)$-contractive coupling w.r.t. $d(\cdot,\cdot)$}.
\end{definition}
\begin{fact}\label{fact:contractiveimpliesgood}
Suppose $P$ admits a $(1-\alpha)$-contractive coupling w.r.t. $d(\cdot,\cdot)$. Then this coupling is $C$-amortized convergent with $C = \frac{1}{\alpha}$.
\end{fact}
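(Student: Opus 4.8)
The plan is to upgrade the one-step contraction supplied by the coupling into a multi-step contraction and then sum a geometric series. A $(1-\alpha)$-contractive coupling gives, for each pair $x \neq y$, a coupling of the one-step distributions $P(x \rightarrow \cdot)$ and $P(y \rightarrow \cdot)$ under which the expected distance is at most $(1-\alpha)\,d(x,y)$. First I would glue these together in the standard Markovian fashion: run the pair process $(X^{(t)}, Y^{(t)})_{t \geq 0}$ so that, conditioned on $(X^{(t)}, Y^{(t)}) = (u,v)$ with $u \neq v$, the next pair is drawn from the contractive one-step coupling associated to $(u,v)$, and from the identity coupling when $u = v$. This yields an honest coupling of two faithful copies of $P$, and it is this full-chain coupling to which ``$C$-amortized convergent'' refers.

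Next I would prove by induction on $t$ that
\begin{align*}
    \E_{X^{(t)}, Y^{(t)}}\wrapb{d(X^{(t)}, Y^{(t)}) \mid \substack{X^{(0)} = x \\ Y^{(0)} = y}} \leq (1-\alpha)^{t}\, d(x,y).
\end{align*}
The base case $t=0$ is trivial. For the inductive step, condition on $(X^{(t)}, Y^{(t)})$ and apply the tower property: by construction of the one-step coupling, $\E\wrapb{d(X^{(t+1)}, Y^{(t+1)}) \mid X^{(t)}, Y^{(t)}} \leq (1-\alpha)\, d(X^{(t)}, Y^{(t)})$ pointwise (trivially so when $X^{(t)} = Y^{(t)}$), and then take expectations and invoke the inductive hypothesis.

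Finally, summing over $t \geq 0$ and using $0 \leq 1-\alpha < 1$ — note $W_1 \geq 0$ forces $\alpha \leq 1$, and the statement $C = 1/\alpha$ being a finite positive constant forces $\alpha > 0$; the boundary case $\alpha = 1$ is immediate since then only the $t=0$ term survives — we obtain
\begin{align*}
    \sum_{t=0}^{\infty} \E_{X^{(t)}, Y^{(t)}}\wrapb{d(X^{(t)}, Y^{(t)}) \mid \substack{X^{(0)} = x \\ Y^{(0)} = y}} \leq \sum_{t=0}^{\infty} (1-\alpha)^{t}\, d(x,y) = \frac{1}{\alpha}\, d(x,y),
\end{align*}
which is exactly $\frac{1}{\alpha}$-amortized convergence. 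The only step requiring any care is the gluing construction — checking that the pieced-together process is a legitimate Markovian coupling of the two chains — but this is entirely standard, and the remaining arithmetic is routine.
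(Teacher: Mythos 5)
Your proof is correct and is exactly the standard argument the paper has in mind: the paper states this as a \emph{Fact} without proof, and the intended justification is precisely your gluing of the one-step contractive couplings into a Markovian coupling of the two chains, the induction giving $\E[d(X^{(t)},Y^{(t)})] \leq (1-\alpha)^{t} d(x,y)$, and the geometric series summing to $\tfrac{1}{\alpha}\, d(x,y)$. Your attention to the coalesced case and to the degenerate values of $\alpha$ is fine but not a point of divergence, so there is nothing further to compare.
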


The following is an immediate application of \cref{thm:specindlocalamortized}, and yields a positive resolution to the Peres-Tetali conjecture for spin systems on bounded-degree graphs.

\begin{theorem}[Curvature Implies Spectral Independence on Product Spaces]\label{thm:curvaturespecind}
Let $\mu$ be a measure on a discrete product space $\Omega = \prod_{v \in V} \Omega(v)$, where $V$ is a finite index set and $\Omega(v)$ is finite for all $v \in V$. Endow $\Omega$ with the Hamming metric $d_{H}(\cdot,\cdot)$, and let $\alpha$ be the discrete Ricci curvature of the Glauber dynamics w.r.t. $(\Omega,d_{H})$. Then, the distribution is $(\eta_{0},\dots,\eta_{n-2})$-spectrally independent where $\eta_{k} \leq \frac{4}{\alpha n} - 1$ for all $k$. In particular, if $\alpha \geq \Omega(1/n)$, then the Glauber dynamics has spectral gap $n^{-O(1)}$. If additionally the measure $\mu$ is the Gibbs distribution of a spin system on a bounded-degree graph, then the spectral gap, standard and modified log-Sobolev constants for the Glauber dynamics are all $\Omega(1/n)$.
\end{theorem}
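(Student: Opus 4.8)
The plan is to re-run the proof of \cref{thm:blackboxcomparison} with the Glauber dynamics, but tracking constants so as to obtain the explicit bound $\eta_{k} \le \frac{4}{\alpha n} - 1$; the two concluding statements then follow verbatim from \cref{thm:localtoglobalspectralgap} and \cref{thm:localtoglobalspinoptimal} as in that proof. Concretely, fix a pinning $A$ with $|A| = k \le n-2$ and an unpinned vertex--value pair $vc \notin A$; I want to bound the $vc$-th absolute row sum of $\mathcal{I}_{\mu \mid A}$. I will apply \cref{thm:specindlocalamortized} to the distribution $\mu \mid A$, the element $vc$, and the Glauber dynamics $P_{\mu \mid A}$, for which I need three things: (a) $P_{\mu \mid A}$ is $\ell$-local with $\ell = 2$ w.r.t.\ the (doubled) Hamming metric; (b) $P_{\mu \mid A}$ admits a $C_{n-k}$-amortized convergent coupling with $C_{n-k} = \frac{n-k}{\alpha n}$; and (c) a bounded-differences estimate $\max_{S} \sum_{T \neq S} \abs{P_{\mu \mid A}(S \to T) - P_{(\mu \mid A) \mid vc}(S \to T)} \le \frac{2}{n-k}$. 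Granting these, \cref{thm:specindlocalamortized} gives each row sum at most $\ell \cdot C_{n-k} \cdot \frac{2}{n-k} = \frac{4}{\alpha n}$, so $\lambda_{\max}(\mathcal{I}_{\mu \mid A}) \le \norm{\mathcal{I}_{\mu \mid A}}_{\infty} \le \frac{4}{\alpha n}$ and hence $\eta_{k} \le \frac{4}{\alpha n} - 1$.

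Ingredient (a) is immediate, since a single Glauber step changes one coordinate. Ingredient (c) is a brute-force calculation: $P_{\mu \mid A}$ and $P_{(\mu \mid A) \mid vc}$ resample an unpinned coordinate using the same conditional marginal of $\mu \mid A$, but with prefactors $\frac{1}{n-k}$ versus $\frac{1}{n-k-1}$ on the $n-k-1$ coordinates other than $v$, contributing at most $\frac{n-k-1}{(n-k)(n-k-1)} = \frac{1}{n-k}$ to the sum; the only remaining discrepancy is the extra $P_{\mu \mid A}$-transitions that resample $v$, which contribute at most another $\frac{1}{n-k}$.

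The heart of the argument is ingredient (b). Let $\Lambda^{\mu} := \max_{v} \max_{\sigma, \tau} \sum_{w \neq v} d_{\TV}(\mu^{w}(\cdot \mid \sigma_{-w}), \mu^{w}(\cdot \mid \tau_{-w}))$, the worst-case total single-coordinate influence felt by one vertex, where the inner maximum is over configurations $\sigma, \tau$ (on $V$, or on $V \setminus A$ in the conditional case) differing only at $v$. The classical path-coupling bound for the Glauber dynamics of $\mu \mid A$ --- couple the uniformly random resampled coordinate identically, then couple the two resamples optimally --- gives a coupling that contracts Hamming distance by a factor $1 - \frac{1 - \Lambda^{\mu \mid A}}{n-k}$ on adjacent pairs, hence on all pairs by the triangle inequality for $W_{1}$. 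Since every conditional marginal of $\mu \mid A$ is a conditional marginal of $\mu$, we have $\Lambda^{\mu \mid A} \le \Lambda^{\mu}$, so it suffices to show $\Lambda^{\mu} \le 1 - \alpha n$. For this I would fix adjacent $\sigma, \tau$ on $V$ differing only at $v$ and use the $1$-Lipschitz test function $f(\eta) = 2 \cdot \mathbb{I}[\eta(v) = \sigma(v)] + \sum_{w \neq v} g_{w}(\eta(w))$, where each $g_{w}$ has oscillation at most $2$ and is chosen to witness $d_{\TV}(\mu^{w}(\cdot \mid \sigma_{-w}), \mu^{w}(\cdot \mid \tau_{-w}))$; expanding $\E_{\sigma P_{\mu}} f - \E_{\tau P_{\mu}} f$ over the random resampled coordinate --- the $w = v$ contribution vanishes because $\sigma_{-v} = \tau_{-v}$ --- gives $W_{1}(\sigma P_{\mu}, \tau P_{\mu}) \ge \frac{2(n-1)}{n} + \frac{2}{n} \sum_{w \neq v} d_{\TV}(\mu^{w}(\cdot \mid \sigma_{-w}), \mu^{w}(\cdot \mid \tau_{-w}))$. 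Comparing with the curvature bound $W_{1}(\sigma P_{\mu}, \tau P_{\mu}) \le 2(1 - \alpha)$ and maximizing over $\sigma, \tau, v$ yields $\Lambda^{\mu} \le 1 - \alpha n$, hence $P_{\mu \mid A}$ has curvature at least $\frac{\alpha n}{n-k}$, and by \cref{fact:contractiveimpliesgood} the coupling above is $\frac{n-k}{\alpha n}$-amortized convergent, i.e.\ (b) holds. (In fact $\alpha = \frac{1}{n}(1 - \Lambda^{\mu})$ exactly, since the identity-coordinate coupling is $W_{1}$-optimal between adjacent configurations; this is the precise sense in which a contractive coupling for the Glauber dynamics is the same as a Dobrushin uniqueness condition, giving the conceptual point that Dobrushin-type conditions imply spectral independence.)

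The step I expect to be the main obstacle is the inequality $\Lambda^{\mu} \le 1 - \alpha n$: a contractive coupling is an arbitrary coupling and need not respect the ``resample one coordinate'' structure of the Glauber dynamics, so recovering a bound on the coordinatewise influences forces the test function to be designed carefully, so that it simultaneously detects the fixed disagreement at $v$ and, coordinate by coordinate, the total-variation gap at every other vertex --- writing $f$ as a sum of single-coordinate functions is what keeps it $1$-Lipschitz in Hamming distance and lets the $g_{w}$ be optimized independently. Everything else --- locality, the bounded-differences estimate, the path-coupling upper bound, and heredity of $\Lambda$ under conditioning --- is routine. Finally, once $\eta_{k} \le \frac{4}{\alpha n} - 1$ holds for all $k$, the hypothesis $\alpha \ge \Omega(1/n)$ makes all $\eta_{k} \le O(1)$; \cref{thm:localtoglobalspectralgap} then gives spectral gap $n^{-O(1)}$ for the Glauber dynamics, and when $\mu$ is the Gibbs distribution of a bounded-degree spin system, \cref{thm:localtoglobalspinoptimal} upgrades this to $\Omega(1/n)$ lower bounds on the spectral gap and the standard and modified log-Sobolev constants.
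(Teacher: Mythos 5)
Your proposal is correct and its skeleton is the one the paper uses: apply \cref{thm:specindlocalamortized} to each conditional distribution with $\ell = 2$, an amortized convergent coupling obtained from curvature via \cref{fact:contractiveimpliesgood}, and the brute-force bounded-differences bound $\frac{2}{n-k}$ (the paper does exactly this computation at level $0$, getting $\frac{2}{n}$, and multiplies $2\cdot\frac{1}{\alpha}\cdot\frac{2}{n}=\frac{4}{\alpha n}$). Where you genuinely diverge is in how the pinned levels $k>0$ are handled. The paper simply says the bound for $\eta_k$ ``follows by the same argument'' applied to the Glauber dynamics of $\mu\mid A$, which implicitly assumes that the curvature hypothesis (stated only for the unconditioned Glauber dynamics) yields a sufficiently contractive coupling for every conditional Glauber dynamics. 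You supply exactly the missing transfer: the Lipschitz test function $f(\eta)=2\,\mathbb{I}[\eta(v)=\sigma(v)]+\sum_{w\neq v}g_w(\eta(w))$ converts the curvature bound $W_1(\sigma P_\mu,\tau P_\mu)\le 2(1-\alpha)$ into the Dobrushin-type bound $\Lambda^{\mu}\le 1-\alpha n$ (your computation checks out, and the cancellation of the $w=v$ term is the right observation), this quantity is monotone under pinning, and path coupling then gives the conditional Glauber dynamics curvature at least $\frac{\alpha n}{n-k}$, whence $C_{n-k}=\frac{n-k}{\alpha n}$ and the level-$k$ row sums are again $\frac{4}{\alpha n}$. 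This buys a rigorous and quantitatively clean treatment of the conditionals (and, as a bonus, makes precise the equivalence between Glauber curvature and a Dobrushin-type condition, which the paper only uses in the opposite direction in \cref{fact:dobrushinpathcoupling}), at the cost of an extra argument that is unnecessary at level $0$. One caveat, at the same level of rigor as the paper's own use of path coupling: extending contraction from Hamming-adjacent pairs to all pairs, both for you and in \cref{fact:dobrushinpathcoupling}, needs the Hamming metric to coincide with the path metric on the support of each $\mu\mid A$, which is fine for soft constraints and is what the paper's standing connectivity assumption is meant to cover; your parenthetical claim that the single-coordinate coupling is $W_1$-optimal is not needed for this (the exact identity $\alpha=\frac{1}{n}(1-\Lambda^{\mu})$ already follows from your two inequalities). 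The two concluding statements are then obtained exactly as in the paper from \cref{thm:localtoglobalspectralgap} and \cref{thm:localtoglobalspinoptimal}.
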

Note that since the Glauber dynamics only updates the assignment to a single $v \in V$ in each step, it must be that $\alpha \leq O(1/n)$.
\begin{proof}
We show that $\eta_{0} \leq \frac{4}{\alpha n} - 1$. The bound $\eta_{k} \leq \frac{4}{\alpha n} - 1$ follows by the same argument by instead considering the Glauber dynamics for the conditional distributions $\mu \mid A$ of $\mu$. Because the Glauber dynamics only updates at most one coordinate in each step, it is $2$-local w.r.t. $d_{H}(\cdot,\cdot)$. By \cref{fact:contractiveimpliesgood}, we also have there is a $C$-amortized convergent coupling with $C = \frac{1}{\alpha}$. It follows from \cref{thm:specindlocalamortized} that
\begin{align*}
    &\sum_{v \in V} \sum_{c' \in \Omega(v)} \abs{\Pr_{\sigma \sim \mu}[\sigma(v) = c' \mid \sigma(u) = c] - \Pr_{\sigma \sim \mu}[\sigma(v) = c']} \leq \frac{2}{\alpha} \max_{\sigma \in \supp(\mu \mid uc)}\sum_{\tau \neq \sigma} \abs{P_{\mu}(\sigma \rightarrow \tau) - P_{\mu \mid uc}(\sigma \rightarrow \tau)}.
\end{align*}
Now, by the definition of the Glauber dynamics, for each $\sigma \in \supp(\mu \mid uc)$, we have
\begin{align*}
    &\sum_{\tau \neq \sigma} \abs{P_{\mu}(\sigma \rightarrow \tau) - P_{\mu \mid uc}(\sigma \rightarrow \tau)} \\
    &= \sum_{v \in V} \sum_{c' \in L(v) : c' \neq \sigma(v)} \abs{\frac{1}{n}\mu^{v}(c' \mid \sigma_{-v}) - \frac{1}{n-1}\mu_{uc}^{v}(c' \mid \sigma_{-v})} \\
    &= \sum_{v \in V : v \neq u} \sum_{c' \in L(v) : c' \neq \sigma(v)} \wrapp{\frac{1}{n-1} - \frac{1}{n}} \mu^{v}(c' \mid \sigma_{-v}) + \sum_{c' \in L(u) : c' \neq c} \frac{1}{n}\mu^{v}(c' \mid \sigma_{-v}) \\
    &\leq \frac{2}{n}.
\end{align*}
The claim for the spectral gap in the case $\alpha \geq \Omega(1/n)$ follows by combining with \cref{thm:localtoglobalspectralgap}. The final claim for spin systems on bounded-degree graphs follows by combining with \cref{thm:localtoglobalspinoptimal}.
\end{proof}

\subsection{Dobrushin Uniqueness and Spectral Independence}
We now use \cref{thm:curvaturespecind} to show that Dobrushin's uniqueness condition implies spectral independence.
\begin{definition}[Dobrushin Influence]
Fix a probability measure on a finite product space $\prod_{v \in V} \Omega(v)$, where $V$ is a finite indexing set. For each $u \in V$, let $D_{u}$ be the collection of pairs $\tau,\sigma \in \prod_{v \in V} \Omega(v)$ such that $\tau_{-u} = \sigma_{-u}$ while $\tau(u) \neq \sigma(u)$. For distinct $u,v \in V$, we may then define the \textbf{Dobrushin influence} of $u$ on $v$ by
\begin{align*}
    \rho_{\mu}(u \rightarrow v) = \max_{(\tau,\sigma) \in D_{u}} d_{\TV}(\mu^{v}(\cdot \mid \tau_{-v}), \mu^{v}(\cdot \mid \sigma_{-v})).
\end{align*}
We write $\rho_{\mu} = (\rho_{\mu}(u \rightarrow v))_{u,v} \in \R^{V \times V}$ for the \textbf{Dobrushin influence matrix}. We say the distribution $\mu$ satisfies the \textbf{Dobrushin uniqueness condition} if
\begin{align*}
    \norm{\rho_{\mu}}_{1} \overset{\defin}{=} \max_{u \in V} \sum_{v \in V} \rho_{\mu}(u \rightarrow v) < 1.
\end{align*}
\end{definition}
A straightforward application of the path coupling technique of \cite{BD97i, BD97ii} shows that if $\norm{\rho_{\mu}}_{1} < 1$, then there is a coupling for the Glauber dynamics which is one-step contractive w.r.t. Hamming distance. We state this well-known implication formally here, and refer to \cite{DGJ09} for the proof.
\begin{fact}\label{fact:dobrushinpathcoupling}
Let $\mu$ be a distribution on some finite product space $\prod_{v \in V} \Omega(v)$, where $V$ is a finite index set. If $\norm{\rho_{\mu}}_{1} \leq \gamma < 1$, then the Glauber dynamics is $(1-\alpha)$-contractive w.r.t. Hamming distance with $\alpha = \frac{1}{n}(1 - \gamma)$.
\end{fact}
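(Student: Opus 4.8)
The plan is to invoke the standard path coupling machinery of Bubley--Dyer \cite{BD97i, BD97ii}, so it suffices to exhibit a coupling only on pairs of configurations $\sigma, \tau$ that differ in a single coordinate and to show the expected Hamming distance contracts by the claimed factor. So fix $u \in V$ and a pair $\sigma, \tau$ with $\sigma_{-u} = \tau_{-u}$ and $\sigma(u) \neq \tau(u)$, i.e. $d_H(\sigma,\tau) = 1$ (in the single-coordinate Hamming metric natural for a product space). The Glauber dynamics first picks a uniformly random coordinate $v \in V$ to update; I would couple the two chains to pick the same $v$, and then analyze three cases for the resulting one-step expected distance.

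First, if $v = u$ (probability $1/n$), then after resampling both chains agree on coordinate $u$, and since they already agreed everywhere else, the distance drops to $0$; this case contributes a decrease of $1/n$ to the expected distance. Second, if $v \neq u$ but $v$ is not a neighbor of $u$ in the sense that $\rho_\mu(u \to v) = 0$ — more to the point, for a general bound — I would use the optimal coupling of the two update distributions $\mu^v(\cdot \mid \sigma_{-v})$ and $\mu^v(\cdot \mid \tau_{-v})$, which by Kantorovich duality (with $d$ the discrete metric on $\Omega(v)$, matching the single-coordinate Hamming metric) disagrees with probability exactly $d_{\TV}(\mu^v(\cdot \mid \sigma_{-v}), \mu^v(\cdot \mid \tau_{-v})) \leq \rho_\mu(u \to v)$; on disagreement the distance increases by $1$. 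Summing over all $v \neq u$, the total expected increase is at most $\frac{1}{n} \sum_{v \neq u} \rho_\mu(u \to v) \leq \frac{1}{n}\|\rho_\mu\|_1 \leq \frac{\gamma}{n}$.

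Combining the two contributions, the expected one-step change in $d_H(\sigma,\tau)$ starting from distance $1$ is at most $-\frac{1}{n} + \frac{\gamma}{n} = -\frac{1-\gamma}{n}$, so the expected distance after one step is at most $1 - \frac{1-\gamma}{n} = (1-\alpha)\cdot d_H(\sigma,\tau)$ with $\alpha = \frac{1}{n}(1-\gamma)$. By the path coupling lemma this extends to a genuine coupling of the Glauber dynamics, contractive in expectation by the factor $(1-\alpha)$ on all pairs, which is exactly the assertion that the Glauber dynamics admits a $(1-\alpha)$-contractive coupling w.r.t.\ Hamming distance in the sense of \cref{def:contractive}.

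The only subtlety — and the step I would be most careful about — is bookkeeping the normalization of the Hamming metric. The paper warns that its Hamming distance on $\binom{U}{n}$ is \emph{twice} the usual product-space Hamming distance; one should check that the factor-of-two rescaling cancels consistently on both sides of the contraction inequality (it does, since contractivity is scale-invariant), so the stated constant $\alpha = \frac{1}{n}(1-\gamma)$ is unaffected. Beyond this, the argument is entirely routine and is exactly the classical derivation found in \cite{DGJ09}, to which I would defer for the remaining details.
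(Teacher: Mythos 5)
Your argument is correct and is exactly the standard path-coupling derivation that the paper itself does not reproduce but delegates to \cite{DGJ09}: couple the choice of update coordinate, use the optimal (total-variation) coupling of the conditional update distributions on adjacent pairs, and bound the expected increase by $\frac{1}{n}\sum_{v\neq u}\rho_{\mu}(u\to v)\leq\frac{\gamma}{n}$ against the $\frac{1}{n}$ decrease from updating the disagreeing coordinate. Your remark that the factor-of-two normalization of the Hamming metric is immaterial for the contraction ratio is also correct, so nothing further is needed.
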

In particular, combining \cref{thm:curvaturespecind} and \cref{fact:dobrushinpathcoupling} immediately yields spectral independence under the Dobrushin uniqueness condition. Combined with \cref{thm:localtoglobalspinoptimal}, this additionally recovers a version of a result due to \cite{Mar19}, which says that a weaker $\ell_{2}$-version of the Dobrushin uniqueness condition (see also \cite{Hay06, DGJ09}) implies a $\Omega(1/n)$ log-Sobolev constant for the Glauber dynamics.

\begin{corollary}[Dobrushin Uniqueness Implies Spectral Independence]
Let $\mu$ be a distribution on some finite product space $\prod_{v \in V} \Omega(v)$, where $V$ is a finite index set. If $\norm{\rho_{\mu}}_{1} \leq \gamma < 1$, then $\mu$ is $(\eta_{0},\dots,\eta_{n-2})$-spectrally independent with $\eta_{k} \leq \frac{4}{1 - \gamma} - 1$ for all $k$. If additionally the measure $\mu$ is the Gibbs distribution of a spin system on a bounded-degree graph, then the spectral gap, standard and modified log-Sobolev constants for the Glauber dynamics are all $\Omega(1/n)$.
\end{corollary}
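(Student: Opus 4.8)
The plan is to obtain this as a direct corollary of \cref{fact:dobrushinpathcoupling} and \cref{thm:curvaturespecind}, the only point requiring a moment's thought being that the Dobrushin uniqueness condition must be checked to persist under conditioning, so that the per-level bounds $\eta_{k}$ are all controlled simultaneously.

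First I would record the monotonicity of the Dobrushin influence matrix under conditioning. Fix a subset $S \subseteq V$ and a partial configuration $\xi$ on $S$, and regard $\mu \mid \xi$ as a measure on $\prod_{v \in V \setminus S} \Omega(v)$. For distinct $u, v \in V \setminus S$, any pair of configurations on $V \setminus S$ that agree off $u$, appearing in the definition of $\rho_{\mu \mid \xi}(u \rightarrow v)$, extends together with $\xi$ to a pair of full configurations agreeing off $u$, and the single-site conditional marginals at $v$ entering $\rho_{\mu \mid \xi}(u \rightarrow v)$ are exactly single-site conditional marginals of $\mu$ subject to the additional boundary condition $\xi$. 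Hence $\rho_{\mu \mid \xi}(u \rightarrow v) \leq \rho_{\mu}(u \rightarrow v)$, and so $\norm{\rho_{\mu \mid \xi}}_{1} \leq \norm{\rho_{\mu}}_{1} \leq \gamma < 1$; in particular every conditional distribution $\mu \mid A$ also satisfies the Dobrushin uniqueness condition with the same constant $\gamma$.

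Next I would apply \cref{fact:dobrushinpathcoupling} to each $\mu \mid A$ with $|A| = k \leq n - 2$: since $\mu \mid A$ is a distribution on $n - k$ coordinates satisfying the Dobrushin condition with constant $\gamma$, its Glauber dynamics is $(1 - \alpha_{k})$-contractive w.r.t.\ Hamming distance with $\alpha_{k} = \frac{1 - \gamma}{n - k}$; equivalently, the Glauber dynamics of $\mu$ has discrete Ricci curvature $\alpha = \frac{1 - \gamma}{n} \geq \Omega(1/n)$. Feeding $\alpha = \frac{1 - \gamma}{n}$ into \cref{thm:curvaturespecind} gives $\eta_{k} \leq \frac{4}{\alpha n} - 1 = \frac{4}{1 - \gamma} - 1$ for all $k$, as claimed. (Unwinding the proof of \cref{thm:curvaturespecind}, this is just \cref{thm:specindlocalamortized} applied at each level: the Glauber dynamics of $\mu \mid A$ is $2$-local, admits a $\frac{1}{\alpha_{k}}$-amortized convergent coupling by \cref{fact:contractiveimpliesgood}, and has chain differences bounded by $\frac{2}{n - k}$, so $\norm{\mathcal{I}_{\mu \mid A}}_{\infty} \leq 2 \cdot \frac{1}{\alpha_{k}} \cdot \frac{2}{n - k} = \frac{4}{1 - \gamma}$, the factor $n - k$ cancelling.)

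Finally, since $\gamma$ is a fixed constant strictly below $1$, the bound $\eta_{k} \leq \frac{4}{1 - \gamma} - 1$ is $O(1)$ uniformly in $k$, so \cref{thm:localtoglobalspectralgap} already yields an $n^{-O(1)}$ spectral gap for the Glauber dynamics; and if $\mu$ is moreover the Gibbs distribution of a spin system on a bounded-degree graph, \cref{thm:localtoglobalspinoptimal} upgrades this to $\Omega(1/n)$ lower bounds on both the standard and modified log-Sobolev constants, which in turn forces the spectral gap to be $\Omega(1/n)$ as well. The one step carrying any real content is the conditioning monotonicity in the first paragraph; I expect even that to be routine, and everything else is a formal composition of results already established.
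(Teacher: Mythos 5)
Your proposal is correct and follows essentially the same route as the paper, which obtains the corollary by directly combining \cref{fact:dobrushinpathcoupling} with \cref{thm:curvaturespecind} and then invoking \cref{thm:localtoglobalspinoptimal}. Your additional observation that the Dobrushin condition (with the same constant $\gamma$) is inherited by all conditional distributions $\mu \mid A$ is a worthwhile explicit justification of a point the paper leaves implicit in the ``same argument for the conditionals'' step of \cref{thm:curvaturespecind}, and your monotonicity argument for $\rho_{\mu \mid \xi}(u \rightarrow v) \leq \rho_{\mu}(u \rightarrow v)$ is sound.
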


\section{Spectral Independence for Proper List-Colorings}\label{sec:listcolorings}
We now specialize to the setting of proper list-colorings of a graph. Formally, we fix a graph $G=(V,E)$, a collection of color lists $(L(v))_{v \in V}$. We call a configuration $\sigma \in \prod_{v \in V} L(v)$ a list-coloring of $G$. We say a list-coloring $\sigma$ is proper if $\sigma(u) \neq \sigma(v)$ whenever $u \neq v$ are neighbors. Throughout, we will let $\Delta$ denote the maximum degree of $G$, and we assume $\Delta \leq O(1)$. We also assume there is a positive integer $q \geq \Delta + 2$ such that $L(v) \subseteq [q]$ for all $v \in V$.

A well-known result due to \cite{Jer95} using path coupling shows that if $|L(v)| > 2\Delta$ for all $v \in V$, then there is a contractive one-step coupling for the Glauber dynamics which yields $O(n\log n)$ mixing. As noted in \cite{CLV21}, one can adapt the argument of \cite{GKM15} to obtain strong spatial mixing when $|L(v)| > 2\Delta$, and use the arguments of \cite{CGSV21, FGYZ21} to deduce spectral independence in this regime. However, it is still open whether one can obtain strong spatial mixing below the $2\Delta$ threshold; see \cite{GKM15, EGHSV19} for results going below $2\Delta$ on special classes of graphs.

In the seminal work of Vigoda \cite{Vig00}, it was shown that there is a contractive one-step coupling for a different local Markov chain known as the flip dynamics whenever $|L(v)| \geq \frac{11}{6}\Delta$. This threshold was further improved to $|L(v)| \geq \wrapp{\frac{11}{6} - \epsilon}\Delta$ in a recent breakthrough by \cite{CDMPP19}, this time using a more sophisticated variable-length coupling. Both works further showed that Glauber dynamics mixes in $O(n^{2})$ time in this regime using a spectral gap comparison argument \cite{DS93}.

Our goal is to use these coupling results along with \cref{thm:specindlocalamortized} to obtain spectral independence for the uniform distribution over proper list-colorings in the regime $|L(v)| \geq \wrapp{\frac{11}{6} - \epsilon}\Delta$. Combined with \cref{thm:localtoglobalspinoptimal}, we improve the previous $O(n^{2})$ mixing time bound to the optimal $O(n\log n)$, as well as show Chernoff-type concentration bounds for Lipschitz functions, which were not known before.
\subsection{The Flip Dynamics}
We follow the presentation in \cite{CDMPP19}, which generalizes the flip dynamics analyzed in \cite{Vig00} to list-colorings. Fix a list-coloring $\sigma$. We say a path $u = w_{1},\dots,w_{\ell} = v$ in $G$ is an alternating path from $u$ to $v$ using colors $\sigma(u), c$ if for all $i$, we have $\sigma_{w_{i}} \in \{\sigma(u),c\}$ and $\sigma_{w_{i}} \neq \sigma_{w_{i+1}}$. For a fixed list-coloring $\sigma$, $v \in V$ and color $c$, we define the \textbf{Kempe component for $\sigma,v,c$} by the following subset of vertices.
\begin{align*}
    S_{\sigma}(u,c) = \wrapc{v \in V : \substack{\exists \text{ alternating path from }u \\ \text{to } v \text{ using } \sigma(u),c}}.
\end{align*}
Given $\sigma$ and a Kempe component $S = S_{\sigma}(u,c)$, we define $\sigma_{S}$ to be the coloring obtained by ``flipping'' the color assigned to vertices in $\{v \in S : \sigma(v) = \sigma(u)\}$ to $c$, and the color assigned to vertices in $\{v \in S : \sigma(v) = c\}$ to $\sigma(u)$. Note that $\sigma_{S}$ need not be a proper list-coloring; we say a Kempe component $S = S_{\sigma}(u,c)$ is \textbf{flippable} if the coloring $\sigma_{S}$ is a proper list-coloring.

For each $j \in \N$, let $0 \leq p_{j} \leq 1$ be a tunable parameter to be determined later. We define the flip dynamics with flip parameters $\{p_{j}\}_{j \in \N}$ for sampling proper list-colorings as follows: Given the current list-coloring $\sigma^{(t-1)}$, we generate the next list-coloring $\sigma^{(t)}$ by the following two-step process:
\begin{enumerate}
    \item Select a uniformly random vertex $v^{(t)} \in V$, and a uniformly random color $c^{(t)} \in L(v^{(t)})$.
    \item If the Kempe component $S = S_{\sigma^{(t-1)}}(v^{(t)}, c^{(t)})$ is flippable, set $\sigma^{(t)} = \sigma_{S}^{(t-1)}$ with probability $\frac{p_{j}}{j}$ and $\sigma^{(t)} = \sigma^{(t-1)}$ otherwise, where $j = |S|$.
\end{enumerate}
We write $P_{\mu, \flip}$ for the transition probability matrix of the flip dynamics. It is straightforward to verify that the stationary distribution of the flip dynamics is uniform over proper list-colorings, regardless of the choice of the flip parameters. One can recover the Wang-Swendsen-Koteck\'{y} Markov chain by setting $p_{j} = j$ for all $j \in \N$ \cite{WSK89}.

\cite{Vig00} showed that with flip parameters
\begin{align}\label{eq:Vigodaflip}
    p_{1} = 1 \quad p_{2} = \frac{13}{42} \quad p_{3} = \frac{1}{6} \quad p_{4} = \frac{2}{21} \quad p_{5} = \frac{1}{21} \quad p_{6} = \frac{1}{84} \quad p_{j} = 0, \forall j \geq 7,
\end{align}
there is a one-step coupling which is contractive w.r.t. Hamming distance whenever $\abs{L(v)} \geq \frac{11}{6}\Delta$. \cite{CDMPP19} showed using linear programming arguments that this is optimal in the sense that when $\abs{L(v)} < \frac{11}{6}$, there is no choice of the flip parameters which has a one-step contractive coupling w.r.t. Hamming distance. They additionally construct an explicit family of hard instances witnessing optimality.

One of the key insights of \cite{CDMPP19} is that the optimal choice of flip parameters comes out of the solution to a linear program, with the objective value of the program governing the contraction properties of the coupling. By solving this linear program, they show that for the following choice of flip parameters
\begin{align}\label{eq:CDMPPflip}
    \hat{p}_{1} = 1 \quad \hat{p}_{2} \approx 0.296706 \quad \hat{p}_{3} \approx 0.166762 \quad \hat{p}_{4} \approx 0.101790 \quad \hat{p}_{5} \approx 0.058475 \quad \hat{p}_{6} = 0.025989 \quad p_{j} = 0, \forall j \geq 7,
\end{align}
there is a variable-length coupling such that the Hamming distance contracts by a constant factor every $O(n)$ steps in expectation. One can thus expect that the coupling is $C$-amortized convergent with $C \leq O(n)$.

We formalize their main coupling result in the following subsection. For the moment, we state two intermediate lemmas, prove one of them, and show how they imply \cref{thm:coloringspecindmain}.
\begin{lemma}\label{lem:flipmatrixdifbound}
Assume the input graph $G=(V,E)$ has maximum degree $\Delta \leq O(1)$. Then, the flip dynamics with parameters given in \cref{eq:CDMPPflip} satisfy the following:
\begin{align*}
    \max_{\tau \in \supp(\mu \mid uc)} \wrapc{\sum_{\sigma \neq \tau} \abs{P_{\mu,\flip}(\tau \rightarrow \sigma) - P_{\mu \mid uc, \flip}(\tau \rightarrow \sigma)}} \leq O(1/n).
\end{align*}
\end{lemma}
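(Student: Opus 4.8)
The plan is to expand the left-hand side of the claimed inequality by conditioning on the (vertex,~color) pair $(v,c')$ chosen in the first step of the flip dynamics, and to show that $P_{\mu,\flip}$ and $P_{\mu\mid uc,\flip}$ induce \emph{identical} transitions on all but a $\mathrm{poly}(\Delta)$-sized family of such pairs, each carrying selection probability $O(1/n)$. Realize $\mu\mid uc$ as a list-coloring instance on $G-u$ with lists $L'(w)=L(w)\setminus\{c\}$ for $w\sim u$ and $L'(w)=L(w)$ otherwise, so that $P_{\mu,\flip}$ selects $(v,c')$ uniformly among $v\in V,\ c'\in L(v)$ while $P_{\mu\mid uc,\flip}$ selects it uniformly among $v\in V\setminus\{u\},\ c'\in L'(v)$. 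Fix $\tau\in\supp(\mu\mid uc)$, so $\tau(u)=c$. The first step is to check that whenever $v\neq u$, $c'\in L'(v)$, and the Kempe component $S_{\tau}(v,c')$, computed in $G$, does \emph{not} contain $u$, the two chains act identically on this pair: deleting $u$ leaves $v$'s component unchanged, and, should $S_{\tau}(v,c')$ contain a neighbor of $u$, one must have $c\notin\{\tau(v),c'\}$ (otherwise $u$ itself would lie in the component), so the flip never recolors a neighbor of $u$ to $c$; hence flippability and the flip probability $\hat p_{|S|}/|S|$ agree in $G$ and in $G-u$. Such ``generic'' pairs therefore contribute to $\sum_{\sigma\neq\tau}\abs{P_{\mu,\flip}(\tau\to\sigma)-P_{\mu\mid uc,\flip}(\tau\to\sigma)}$ only through the selection-probability mismatch $\bigl|\tfrac{1}{n|L(v)|}-\tfrac{1}{(n-1)|L'(v)|}\bigr|$, which is $O(1/n^{2})$ when $v\not\sim u$ (hence $O(1/n)$ after summing over the at most $nq$ such pairs) and $O(1/n)$ when $v\sim u$ (of which there are only $O(\Delta q)$).

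The next step is to clear the two easy ``boundary'' families. If $v=u$, the pair is never selected by $P_{\mu\mid uc,\flip}$ and is selected by $P_{\mu,\flip}$ with total probability $\tfrac1n$, contributing $O(1/n)$. If $v\sim u$ and $c'=c$, the pair is not selected by $P_{\mu\mid uc,\flip}$ (since $c\notin L'(v)$) and is selected by $P_{\mu,\flip}$ with probability $\tfrac{1}{n|L(v)|}$; there are at most $\Delta$ such pairs, so these contribute $O(\Delta/n)=O(1/n)$ because $\Delta=O(1)$.

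The remaining, and most delicate, case consists of pairs $(v,c')$ with $v\neq u$, $c'\in L'(v)$, and $u\in S_{\tau}(v,c')$, where deleting $u$ genuinely alters the Kempe component and the two chains can move to different states (indeed $P_{\mu,\flip}$ may leave $\supp(\mu\mid uc)$, which $P_{\mu\mid uc,\flip}$ never does). Since there may be $\Theta(n)$ such pairs, a crude ``sum of selection probabilities'' bound is too weak; this is the main obstacle. The resolution combines two facts. First, for each color $a\neq c$ there is a unique $\{a,c\}$-Kempe component $K_{a}$ of $\tau$ containing $u$; whenever $u\in S_{\tau}(v,c')$ one must have $\{\tau(v),c'\}=\{a,c\}$ for this $a$, $v\in K_{a}$, and in fact $S_{\tau}(v,c')=K_{a}$, and for each $v\in K_{a}$ at most one choice of $c'$ is relevant. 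Second, and crucially, the parameters of \cref{eq:CDMPPflip} satisfy $\hat p_{j}=0$ for $j\geq 7$, so a pair contributes to the transition kernel only when the relevant component has at most $6$ vertices. If $|K_{a}|\leq 6$, then $K_{a}$ supplies at most $6$ offending vertices. If $|K_{a}|\geq 7$, then $P_{\mu,\flip}$ performs no flip via any $v\in K_{a}$, so the only contributions come from $v\in K_{a}$ whose component \emph{in $G-u$} has size at most $6$; since $K_{a}\setminus\{u\}$ splits into at most $\deg_{K_{a}}(u)\leq\Delta$ connected components, at most $6\Delta$ such vertices exist. Summing over colors there are only $O(\Delta q)$ contributing pairs of this type, each contributing at most $\tfrac{1}{n|L(v)|}\cdot\tfrac{\hat p_{|S|}}{|S|}+\tfrac{1}{(n-1)|L'(v)|}\cdot\tfrac{\hat p_{|S'|}}{|S'|}=O(1/n)$. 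Adding the $O(1)$ categories yields the asserted $O(1/n)$ bound, with an implicit constant polynomial in $\Delta$ and $q$.
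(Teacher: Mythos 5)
Your proof is correct, but it takes a finer route than the paper's. The paper's argument is purely geometric: since $\hat{p}_{j}=0$ for $j\geq 7$, any component that is ever flipped has size at most $6$, so a flipped component containing $u$ must lie inside the ball $B(u,6)$ of radius $6$ around $u$; consequently, for every selected pair $(v,c')$ with $v\notin B(u,6)$ the two chains make the identical move and differ only in the $\frac{1}{n}$ versus $\frac{1}{n-1}$ selection factor, while everything with $v\in B(u,6)$ is bounded crudely by its total selection mass, giving $\lesssim \Delta^{6}/n$ with no dependence on $q$. You instead take the exceptional set to be the pairs whose Kempe component in $G$ actually contains $u$, plus the two boundary families $v=u$ and $v\sim u,\ c'=c$, and you control its size through the structure of the $\{a,c\}$-components $K_{a}$ of $u$ together with the same cutoff $\hat{p}_{j}=0$ for $j\geq 7$ (used more subtly, to discard vertices of a large $K_{a}$ whose component in $G-u$ is also large). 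This is more work but buys a sharper count of exceptional pairs --- in fact only colors $a$ appearing on neighbors of $u$ give $K_{a}\neq\{u\}$, so your $O(\Delta q)$ count can be improved to $O(\Delta^{2})$, versus the paper's $\Delta^{6}$ --- and you also make explicit a point the paper glosses over, namely that when $u\notin S_{\tau}(v,c')$, flippability (including list membership at neighbors of $u$) agrees in $G$ and in the conditional instance on $G-u$. One bookkeeping caveat: as stated, your implicit constant is ``polynomial in $\Delta$ and $q$'', which is weaker than the paper's $q$-free bound and would matter if the color universe $[q]$ were unbounded; but the $q$ factors disappear from your own estimates once you retain the $1/|L(v)|$ selection factor when summing over $c'\in L(v)$ in the generic case and use that at most $\Delta$ colors $a$ contribute in the delicate case, so this is cosmetic rather than a gap.
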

\begin{lemma}\label{lem:flipdistbound}
Let $(G,\mathcal{L})$ be a list-coloring instance, where $\Delta \leq O(1)$ and $|L(v)| \geq \lambda^{*}\Delta$ for all $v \in V$, where $\lambda^{*} = \frac{11}{6} - \epsilon$ and $\epsilon \approx 10^{-5}$ is a small constant. Then the flip dynamics with parameters given in \cref{eq:CDMPPflip} admits a $C$-amortized convergent coupling w.r.t. Hamming distance where $C \leq O(n)$.
\end{lemma}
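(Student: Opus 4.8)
The plan is to make rigorous the heuristic stated just above: the variable-length coupling of \cite{CDMPP19}, which contracts Hamming distance by a constant factor every expected $O(n)$ steps, should be $O(n)$-amortized convergent. The single input I take from \cite{CDMPP19}, valid in the regime $|L(v)| \ge \lambda^{*}\Delta$, is a coupling $(\sigma^{(t)},\tau^{(t)})_{t\ge0}$ of two copies of the flip dynamics (parameters \cref{eq:CDMPPflip}) together with a stopping time $T$ such that, whenever $\sigma^{(0)},\tau^{(0)}$ disagree at a single vertex (so $d_{H}(\sigma^{(0)},\tau^{(0)})=2$): (i) $\E\wrapb{T}\le\kappa n$; (ii) $\E\wrapb{d_{H}(\sigma^{(T)},\tau^{(T)})}\le 2(1-\delta)$; and (iii) $\E\wrapb{\sum_{t=0}^{T-1}d_{H}(\sigma^{(t)},\tau^{(t)})}\le\kappa' n$, for absolute constants $\kappa,\kappa',\delta>0$ depending only on $\Delta$. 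Property (iii) follows from (i) together with the fact that each coupled step alters at most $6$ vertices (the parameters \cref{eq:CDMPPflip} vanish for $j\ge7$) and the coupling keeps the disagreement set localized, so that $d_{H}(\sigma^{(t)},\tau^{(t)})\le D_{\max}(\Delta)=O(1)$ for $t\le T$; absent such a clean a.s.\ bound, (iii) can be read directly off the expected-distance bookkeeping in \cite{CDMPP19}. Here $\Delta\le O(1)$ is exactly what guarantees $\kappa,\kappa',\delta$ (and $D_{\max}$) are absolute constants.

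To couple the flip dynamics from an arbitrary pair $x,y$ with $d_{H}(x,y)=2m$, I would maintain a path of configurations joining $X^{(t)}$ to $Y^{(t)}$: it starts as a geodesic $x=z_{0},z_{1},\dots,z_{m}=y$ in the Hamming graph, consisting of $m$ adjacent \emph{links}. Each link runs its own copy of the primitive (i)--(iii) from its current (adjacent) endpoints; when a link reaches its stopping time $T$, its two endpoints are at Hamming distance $2J$ for a random $J\ge0$ with $\E[J]\le 1-\delta$ by (ii), and we replace that link by a fresh geodesic of $J$ adjacent links between those endpoints, each of which immediately begins its own copy of the primitive (a link that has reached distance $0$ simply vanishes). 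Regardless of how the links are coordinated, $(X^{(t)},Y^{(t)})$ is a coupling of the flip dynamics started from $x,y$, and at every time $t$ the triangle inequality gives $d_{H}(X^{(t)},Y^{(t)})\le\sum_{e\text{ alive at }t}(\text{Hamming distance across link }e\text{ at time }t)$.

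Now the links evolve as a \emph{subcritical} branching process: a link spawns on average $\E[J]\le 1-\delta<1$ children, so if $N_{g}$ counts the links in generation $g$ then $\E[N_{g}]\le m(1-\delta)^{g}$ and the expected total number of links ever created is $\sum_{g\ge0}\E[N_{g}]\le m/\delta$. Each link contributes, over its lifetime, a distance sum whose expectation is at most $\kappa' n$ by (iii); summing the pointwise triangle inequality over $t$ and over links,
\[
\sum_{t\ge0}\E\wrapb{d_{H}(X^{(t)},Y^{(t)})} \;\le\; \E\wrapb{\#\{\text{links ever created}\}}\cdot\kappa' n \;\le\; \frac{m}{\delta}\cdot\kappa' n \;=\; \frac{\kappa'}{2\delta}\,n\cdot d_{H}(x,y),
\]
so the coupling is $C$-amortized convergent with $C=\frac{\kappa'}{2\delta}=O(1)$ times $n$, i.e.\ $C\le O(n)$.

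The hard part is not this accounting but the construction of the ``dynamic path of links'' as a bona fide stochastic process: the links are created and destroyed at data-dependent random times, consecutive links share an endpoint whose evolution both of them must be consistent with, and the whole configuration-path must at every step remain an honest coupling of two copies of the flip dynamics. This is precisely the variable-length path coupling apparatus of Hayes and Vigoda, and the required care has in essence already been carried out in \cite{CDMPP19} (it underlies their $O(n^{2})$ mixing bound via a spectral-gap comparison with the Glauber dynamics). Granting that construction together with (i)--(iii), the branching-process bound above completes the proof.
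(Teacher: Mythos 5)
Your accounting (subcritical branching of links, total expected progeny $m/\delta$, each link contributing an expected distance-sum $O(n)$) is sound \emph{granting} the existence of the coupling it is applied to, and your reading of the \cite{CDMPP19} primitive is essentially right: since their stopping time $T$ is the first time the Hamming distance changes, the distance is constant before $T$, so your property (iii) follows from (i) directly (your alternative justification via a localized disagreement set is not really the reason, but you hedge correctly). The genuine gap is the object at the center of the argument: the asynchronous ``dynamic path of links'' process, in which each link runs its own copy of the variable-length primitive, links share endpoint trajectories, and links are destroyed and refreshed at data-dependent times, is never constructed, and your claim that this ``is precisely the variable-length path coupling apparatus of Hayes and Vigoda'' and ``has in essence already been carried out in \cite{CDMPP19}'' is not accurate. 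The Hayes--Vigoda machinery deliberately avoids exactly this asynchronous gluing: their Corollary 4 first truncates the variable-length coupling for \emph{adjacent} pairs to a \emph{fixed} horizon $M=\lceil 2\beta W/\alpha\rceil$, and only then composes along a Hamming path, which is unproblematic because couplings of the fixed kernel $P^{M}$ compose transitively. What you need instead is a simultaneous coupling of many chain copies in which two different per-link couplings constrain the same shared endpoint trajectory, with refresh times that differ across links; making this consistent (so that every copy remains a faithful copy of the flip dynamics at all times) is the hard step you explicitly defer, and it is not supplied by the works you cite.

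The paper closes this gap by taking the fixed-length route: it invokes the Hayes--Vigoda conversion as a black box to obtain, for \emph{arbitrary} pairs, a full $M$-step coupling with $M\leq O(n)$ and contraction factor $1-\alpha/2$ per block, and then bounds $\sum_{t\ge 0}\E[d_{H}]$ by splitting $t=jM+k$, summing the resulting geometric series in $j$, and using that within a block the expected distance never exceeds its value at the block's start (because each change of distance contracts it in expectation). This yields $C\leq 2M/\alpha\leq O(n)$ with no new coupling construction needed. If you replace your bespoke branching construction with this conversion theorem, your argument becomes correct and is then essentially the paper's proof; alternatively, if you insist on the dynamic-link process, you must actually build it (e.g.\ by gluing the per-link conditional one-step transition couplings along the path at every time step and verifying faithfulness through the refresh events), which is a nontrivial piece of work that your proposal currently replaces with a citation that does not cover it.
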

\begin{proof}[Proof of \cref{thm:coloringspecindmain}]
The flip dynamics is clearly $O(1)$-local w.r.t. Hamming distance since only Kempe components of size at most $6$ can be flipped. $(\eta_{0},\dots,\eta_{n-2})$-spectral independence where $\eta_{k} \leq O(1)$ for all $k$ then follows immediately by combining \cref{lem:flipmatrixdifbound} and \cref{lem:flipdistbound} with \cref{thm:blackboxcomparison}. The lower bounds on the spectral gap, standard and modified log-Sobolev constants then follow from \cref{thm:localtoglobalspinoptimal}.
\end{proof}
\begin{proof}[Proof of \cref{lem:flipmatrixdifbound}]
The main detail one must be careful of is that the flip dynamics for sampling from $\mu \mid uc$ always leaves the color for $u$ fixed to $c$. Hence, flipping any Kempe component containing $u$ leads to potentially different list-colorings under $P_{\mu, \flip}$ versus $P_{\mu \mid uc, \flip}$. However, since we only flip components of $O(1)$-size, this isn't an issue for us.

Fix a $\tau$ with $\tau(u) = c$, and let $B(u,6)$ denote the set of vertices of shortest path distance at most $6$ away from $u$ in $G$. Since we only flip Kempe components of size at most $6$, we have that for any $v \in V \setminus B(u,6)$ and $c \in L(u)$, the flippable Kempe component $S_{\tau}(v,c')$ does not contain $u$, and hence, flipping it leads to the same list-coloring under $P_{\mu,\flip}$ and $P_{\mu \mid vc, \flip}$. Hence, we have
\begin{align*}
    &\sum_{\sigma \neq \tau} \abs{P_{\mu,\flip}(\tau \rightarrow \sigma) - P_{\mu \mid uc, \flip}(\tau \rightarrow \sigma)} \\
    &= \sum_{v \in V : v \notin B(u,6)} \sum_{c' \in L(v)} \frac{1}{|L(v)|} \cdot \wrapp{\frac{1}{n} - \frac{1}{n-1}} \cdot p_{|S_{\tau}(v,c')|} + \sum_{v \in B(u,6)} \sum_{c' \in [q]} \abs{P_{\mu,\flip}(\tau \rightarrow \sigma) - P_{\mu \mid uc, \flip}(\tau \rightarrow \sigma)} \\
    &\leq \frac{n - |B(u,6)|}{n(n-1)} + \frac{|B(u,6)|}{n} \\
    &\leq \frac{|B(u,6)| + 1}{n} \\
    &\lesssim \frac{\Delta^{6}}{n} \\
    &\leq O(1/n). \tag{Bounded-degree assumption}
\end{align*}
\end{proof}
\begin{remark}
As one can see in the proof from the factor of $\Delta^{6}$, we have made no attempt to optimize constants.
\end{remark}
At this point, all that remains is to prove \cref{lem:flipdistbound}, which we do using the variable-length path coupling constructed in \cite{CDMPP19}.

\subsection{Variable-Length Path Coupling: Proof of \texorpdfstring{\cref{lem:flipdistbound}}{flipdistbound}}
To begin, we first define the notion of variable-length coupling following \cite{HV07, CDMPP19}.
\begin{definition}[Path-Generating Set]
For a finite state space $\Omega$, a \textbf{path generating set} is a subset $S \subseteq \binom{\Omega}{2}$ such that the undirected graph $(\Omega,S)$ is connected. We let $d_{S}(\cdot,\cdot)$ denote the induced shortest-path metric on $\Omega$, and write $d(\cdot,\cdot)$ when the path generating set $S$ is clear from context. We also write $x \sim y$ whenever $\{x,y\} \in S$.
\end{definition}
\begin{definition}[Variable-Length Path Coupling \cite{HV07}]
Fix an irreducible transition probability matrix $P$ which is reversible w.r.t. a distribution $\pi$ on a finite state space $\Omega$, and let $d(\cdot,\cdot)$ be a metric on $\Omega$ induced by a path generating set $S \subseteq \binom{\Omega}{2}$. For every pair of starting states $x^{(0)},y^{(0)} \in \Omega$ with $x^{(0)} \sim y^{(0)}$, we let $(\overline{x}, \overline{y}, T) = (\overline{x}(x^{(0)}, y^{(0)}), \overline{y}(x^{(0)},y^{(0)}), T(x^{(0)},y^{(0)}))$ denote a random variable where $T$ is a (potentially random) nonnegative integer and $\overline{x} = (x^{(0)}, x^{(1)},\dots,x^{(T)}), \overline{y} = (y^{(0)}, y^{(1)},\dots,y^{(T)})$ are length-$T$ sequences of states in $\Omega$.

For every integer $t \geq 0$ and every pair of neighboring states $x^{(0)} \sim y^{(0)}$, define random variables $x_{t},y_{t}$ by the following experiment. Sample $(\overline{x},\overline{y},T)$, and set $x_{t} = x^{(t)}, y_{t} = y^{(t)}$ if $t \leq T$, and sample $x_{t} \sim P^{t - T}(x^{(T)},\cdot), y_{t} \sim P^{t - T}(y^{(T)},\cdot)$ if $t > T$. We say the random variable $(\overline{x},\overline{y},T)$ is a \textbf{variable-length path coupling for $P$} if $x_{t} \sim P^{t}(x^{(0)},\cdot), y_{t} \sim P^{t}(y^{(0)},\cdot)$ for every integer $t \geq 0$ and every pair of neighboring states $x^{(0)} \sim y^{(0)}$. In this case, we say that $\overline{x},\overline{y}$ are individually \textbf{faithful copies}. If $T = t$ with probability $1$ for some nonnegative integer $t \geq 0$, we say that $(\overline{x},\overline{y},T)$ is a \textbf{$t$-step path coupling}.
\end{definition}
\begin{remark}
In our application to colorings, the random time $T$ will be a stopping time in the sense that its value only depends on the past, i.e. $x^{(0)},y^{(0)},\dots, x^{(t)},y^{(t)}$ for $t \leq T$.
\end{remark}
Given a variable-length path coupling, \cite{HV07} showed one can construct a full coupling, generalizing the original path coupling theorem of \cite{BD97i, BD97ii}. Furthermore, the contraction properties of the full coupling are inherited from the path coupling. While the original statement in \cite{HV07} merely states rapid mixing given a variable-length path coupling, its proof implies the following.
\begin{theorem}[Proof of Corollary 4 from \cite{HV07}]\label{thm:variabletomultistep}
Let $(\overline{x},\overline{y},T)$ be a variable-length path coupling w.r.t. a path generating set $S$ for a reversible Markov chain $P$ on a state space $\Omega$ with stationary distribution $\pi$. Let
\begin{align*}
    \alpha \overset{\defin}{=} 1 - \max_{\{x^{(0)}, y^{(0)}\} \in S} \E[d_{H}(x^{(T)}, y^{(T)})] \quad\quad W \overset{\defin}{=} \max_{\{x^{(0)}, y^{(0)}\} \in S, t \leq T} d_{H}(x^{(t)}, y^{(t)}) \quad\quad \beta \overset{\defin}{=} \max_{\{x^{(0)}, y^{(0)}\} \in S} \E[T].
\end{align*}
Assume $0 < \alpha < 1$. Then there is a full $M$-step coupling with $M = \lceil \frac{2\beta W}{\alpha} \rceil$ such that for all pairs $x^{(0)},y^{(0)}$, which need not be neighbors in $S$, we have the inequality
\begin{align*}
    \E[d_{H}(x^{(M)},y^{(M)}) \mid x^{(0)},y^{(0)}] \leq \wrapp{1 - \frac{\alpha}{2}} \cdot d_{H}(x^{(0)},y^{(0)}).
\end{align*}
\end{theorem}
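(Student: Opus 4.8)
The plan is to run the classical path-coupling argument of Bubley--Dyer, but applied to the whole random-length coupling at once rather than to a single transition. The first step is to reduce to neighboring pairs. Given arbitrary $x^{(0)}, y^{(0)}$, fix a shortest path $x^{(0)} = z_{0} \sim z_{1} \sim \cdots \sim z_{k} = y^{(0)}$ in the path-generating graph $(\Omega, S)$ realizing $d_{H}(x^{(0)}, y^{(0)}) = \sum_{i} d_{H}(z_{i}, z_{i+1})$; on each edge $\{z_{i}, z_{i+1}\}$ run the single-edge $M$-step coupling constructed below and glue consecutive ones together, using that the marginal trajectory of $z_{i}$ coming from the $\{z_{i-1}, z_{i}\}$-coupling and the one coming from the $\{z_{i}, z_{i+1}\}$-coupling are each faithful copies of $M$ steps of $P$ from $z_{i}$, hence have the same law and can be glued by conditioning one on the other. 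The triangle inequality for $d_{H}$ then gives $\E[d_{H}(x^{(M)}, y^{(M)})] \le \sum_{i=0}^{k-1} \E[d_{H}(z_{i}^{(M)}, z_{i+1}^{(M)})]$, so it suffices to prove $\E[d_{H}(x^{(M)}, y^{(M)})] \le (1 - \alpha/2)\, d_{H}(x^{(0)}, y^{(0)})$ whenever $x^{(0)} \sim y^{(0)}$.

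For a neighboring pair I would run the variable-length path coupling $(\overline{x}, \overline{y}, T)$ and first control the distance at the random truncated time $T \wedge M$. Since $\E[T] \le \beta$, Markov's inequality gives $\Pr[T > M] \le \beta/M \le \alpha/(2W)$ from the choice $M \ge 2\beta W/\alpha$. Splitting on $\{T \le M\}$: on this event $x^{(T \wedge M)} = x^{(T)}$, and the contribution is at most $\E[d_{H}(x^{(T)}, y^{(T)})] \le 1 - \alpha$ by definition of $\alpha$; on the complement $\{T > M\}$ the time $M$ still lies in the range $\{0,\dots,T\}$ of the coupling, so $d_{H}(x^{(M)}, y^{(M)}) \le W$, and since this event has probability at most $\alpha/(2W)$ it contributes at most $\alpha/2$. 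Hence $\E[d_{H}(x^{(T \wedge M)}, y^{(T \wedge M)})] \le 1 - \alpha/2$.

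It then remains to extend the coupling from the random time $T \wedge M$ to the deterministic horizon $M$ without inflating the expected distance. On $\{T \le M\}$, if the two copies have already coalesced --- which, by the bound $\E[d_{H}(x^{(T)}, y^{(T)})] < 1$, is the typical outcome --- I would run the identity coupling for the remaining steps, keeping them equal. Otherwise they sit at distance $D \in [1, W]$, and I would recurse: connect $x^{(T)}$ and $y^{(T)}$ by a geodesic, rerun the variable-length path couplings truncated to the leftover budget, and iterate until coalescence. A Wald-type/renewal computation bounds the expected number of steps to coalesce from distance $D$ by $O(D\beta/\alpha)$, so Markov's inequality makes "not yet coalesced by time $M$" unlikely, and on that rare event one controls the distance using that every variable-length path coupling keeps $d_{H}$ below $W$; the resulting loss is absorbed into the $\alpha/2$ slack. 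Combining this with the previous paragraph, each neighboring pair contracts by $1 - \alpha/2$ over $M$ steps, which plugs into the geodesic bound to finish the proof for all pairs.

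The main obstacle is this last step: reconciling the genuinely random length of the coupling with a fixed horizon while keeping the accounting tight enough that $M = \lceil 2\beta W/\alpha \rceil$ suffices. The delicate point is that after an incomplete coalescence the Hamming distance could in principle grow from one recursion level to the next, so one has to exploit the renewal structure --- deep recursion is geometrically unlikely --- together with careful bookkeeping of the running maximum $W$ inside each variable-length path coupling, so that the deep levels contribute negligibly.
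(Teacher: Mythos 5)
Note first that the paper does not actually prove this statement: it is extracted, as the bracketed title indicates, from the proof of Corollary 4 in \cite{HV07}, so you are in effect reconstructing Hayes--Vigoda's argument from scratch. Your first two steps are sound and match the natural opening of that argument: the reduction to adjacent pairs by gluing $M$-step couplings along a geodesic of the path-generating graph is the standard path-coupling composition, and the truncation bound $\E[d_{H}(x^{(T\wedge M)},y^{(T\wedge M)})] \le (1-\alpha) + W\cdot\Pr[T>M] \le 1-\alpha/2$, using Markov's inequality and $M \ge 2\beta W/\alpha$, is correct and is indeed how $\beta$ and $W$ enter.

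The genuine gap is the last step, which is where the entire difficulty of the theorem lives: the coupling must be evaluated at the deterministic time $M$, not at $T\wedge M$, and your accounting leaves no room for the extension. The $\alpha/2$ slack is spent entirely on the event $\{T>M\}$. On $\{T\le M,\ d_{H}(x^{(T)},y^{(T)})=D\ge 1\}$ --- an event of probability up to $1-\alpha$, carrying expected distance mass up to $1-\alpha$ --- continuing to time $M$ reads the pair off in the middle of a later epoch, where the only pointwise guarantee available is $d_{H}\le DW$; a naive bound therefore gives $\E[d_{H}(x^{(M)},y^{(M)})]$ of order $(1-\alpha)W+\alpha/2$, far above $1-\alpha/2$ unless $W=1$. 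The statements you invoke to close this --- that coalescence at time $T$ is ``the typical outcome'' (in fact $\Pr[\text{no coalescence}]$ can be as large as $1-\alpha$, since all you know is $\E[d_{H}(x^{(T)},y^{(T)})]\le 1-\alpha$), that the expected coalescence time from distance $D$ is $O(D\beta/\alpha)$, and that the resulting loss ``is absorbed into the $\alpha/2$ slack'' --- are precisely the Wald/supermartingale bookkeeping across random epoch boundaries, together with the verification that the recursively glued object remains a faithful coupling at every fixed time, that constitute the substance of the proof in \cite{HV07}. None of this is carried out, and as budgeted it cannot work: even a correct renewal estimate must be combined with the factor-$W$ mid-epoch blowup jointly (e.g.\ by controlling the epoch-start distance of the epoch containing $M$), not appended after the $T\wedge M$ computation has already consumed the slack. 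So the proposal is a reasonable opening, but the step that makes the theorem true is missing.
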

Given this, all we need now is a good variable-length path coupling. This is given by the following result due to \cite{CDMPP19}.
\begin{theorem}[\cite{CDMPP19}]\label{thm:coloringvarlengthcoupling}
Let $(G,\mathcal{L})$ be a list-coloring instance, where $G=(V,E)$ is a graph with maximum degree $\Delta \leq O(1)$, and $\mathcal{L} = (L(v))_{v \in V}$ is a collection of color lists. Let the path generating set $S$ be given by the set of pairs $\{\tau,\sigma\}$ such that $\tau,\sigma$ differ on the coloring of exactly one vertex. Assume $|L(v)| \geq \lambda^{*}\Delta$ for all $v \in V$ where $\lambda^{*} = \frac{11}{6} - \epsilon$ for an absolute constant $\epsilon \approx 10^{-5}$. Then there exists a variable-length path coupling $(\overline{\tau},\overline{\sigma},T)$ for the flip dynamics w.r.t. $S$ with flip parameters given in \cref{eq:CDMPPflip}, where $T$ is the first time such that the Hamming distance changes, such that $\alpha = \frac{q - \lambda^{*}\Delta}{q - \Delta - 2} = \Theta(1)$, $W = 13$ and $\beta \leq \frac{qn}{q - \Delta - 2} \leq O(n)$
\end{theorem}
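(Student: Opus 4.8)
The plan is to instantiate the coupling of \cite{CDMPP19}, so the proof reduces to describing it and reading off the three parameters. Fix a pair $\{\tau^{(0)},\sigma^{(0)}\} \in S$; by definition of $S$ they agree except at a single vertex $w_0$, with $\tau^{(0)}(w_0) = a \neq b = \sigma^{(0)}(w_0)$. I would run the two flip chains with shared randomness: at each step both copies draw the same uniform vertex $v^{(t)} \in V$ and color $c^{(t)} \in L(v^{(t)})$, and I would couple the accept/reject coins so that equal-size flippable Kempe components are flipped together, and otherwise couple the coins to minimize the chance of creating a disagreement. Let $T$ be the first time $d_H(\tau^{(t)},\sigma^{(t)})$ differs from its initial value $1$; by construction, until time $T$ the two copies disagree on exactly one vertex, and any flip of a Kempe component avoiding that vertex is executed identically in both copies. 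It is then immediate that $(\overline{\tau},\overline{\sigma},T)$ is a variable-length path coupling in the required sense: the only thing coupled is the joint law of the common vertex, color, and coins at each step, so each of $\overline{\tau},\overline{\sigma}$ is marginally faithful to $P_{\mu,\flip}$, and $T$ depends only on the past.

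\textbf{The bound $W = 13$.} With the parameters \cref{eq:CDMPPflip} we have $\hat p_j = 0$ for $j \geq 7$, so only Kempe components of size $\leq 6$ are ever flipped. At any step $t \leq T$ the disagreement just before the step is a single vertex $w$ (since $d_H = 1$ at all times $< T$), and each copy flips at most one Kempe component, of size $\leq 6$; afterwards the copies can disagree only on the union of $\{w\}$ with these (at most two) components, a set of size $\leq 1 + 6 + 6 = 13$. Hence $d_H(\tau^{(t)},\sigma^{(t)}) \leq 13$ for all $t \leq T$, so $W = 13$.

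\textbf{The bound $\beta = \E[T]$.} I would lower bound, uniformly over the state just before a step $t \leq T$ (disagreement at a single vertex $w$), the probability that the step changes $d_H$. On the event that $v^{(t)} = w$ (probability $\frac 1n$) and $c^{(t)}$ is a color for which $\{w\}$ is a flippable singleton Kempe component in at least one copy, the flip — which occurs with probability $\hat p_1 = 1$ — recolors $w$ and changes $d_H$. There are at least $|L(w)| - \Delta - 2$ such colors (discard the $\leq \Delta$ colors on neighbors of $w$ together with the two colors currently on $w$), so $d_H$ changes with probability at least $\frac{|L(w)| - \Delta - 2}{n\,|L(w)|} \geq \frac{q - \Delta - 2}{nq}$. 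Thus $T$ is stochastically dominated by a geometric random variable with this success probability, giving $\beta = \max_{\{\tau^{(0)},\sigma^{(0)}\}\in S} \E[T] \leq \frac{nq}{q - \Delta - 2} = O(n)$, using $q \geq \Delta + 2$ and $\Delta = O(1)$.

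\textbf{The bound on $\alpha$ — the main obstacle.} What remains is to show $\E[d_H(\tau^{(T)},\sigma^{(T)})] \leq 1 - \alpha$ with $\alpha = \frac{q - \lambda^{*}\Delta}{q - \Delta - 2}$, which is the technical heart of the argument and is exactly the Vigoda-style computation as sharpened by \cite{CDMPP19}. I would condition on the coupled step at time $T$ and enumerate, by a finite case analysis, the local structure around the disagreeing vertex according to which Kempe components the two copies flip: either the disagreement is eliminated (the copies then agree), or it propagates, creating new disagreements supported on the flipped components. Assigning each vertex a ``burden'' determined by its local coloring, one expresses the conditional expectation of the post-flip distance as a linear functional of the flip parameters $(\hat p_1,\dots,\hat p_6)$, producing a finite linear program; \cite{CDMPP19} show, building on \cite{Vig00}, that its optimal value over valid flip parameters equals $1 - \frac{q - \lambda^{*}\Delta}{q - \Delta - 2}$ and is attained at \cref{eq:CDMPPflip}. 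Substituting gives the claimed $\alpha$, which is $\Theta(1)$ because $q - \Delta - 2 = \Theta(\Delta) = \Theta(1)$ while, as \cite{CDMPP19} verify, $q - \lambda^{*}\Delta$ is bounded below by a positive absolute constant whenever $|L(v)| \geq \lambda^{*}\Delta$ for all $v$. I would take this LP optimization as a black box from \cite{CDMPP19}; the one extra ingredient is the identification of the stopping time $T$ with ``the first time the Hamming distance changes'', which makes $\beta$ finite and, via \cref{thm:variabletomultistep}, converts the path coupling into the full multi-step contraction used in \cref{lem:flipdistbound}.
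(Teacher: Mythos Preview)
The paper does not prove this statement; it is quoted from \cite{CDMPP19} and used as a black box in the proof of \cref{lem:flipdistbound}. Your sketch is a faithful outline of the construction in \cite{CDMPP19}: shared randomness for the vertex/color/coin draws, $T$ taken to be the first time the Hamming distance moves off $1$, the $W=13$ bound from the support of the flip parameters, and the LP case analysis of \cite{Vig00,CDMPP19} for $\alpha$ taken as given. That is exactly how the paper intends this theorem to be read.

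One small quantitative slip in your $\beta$ argument: from the event you describe you obtain $\Pr[\text{$d_H$ changes}] \geq \frac{|L(w)|-\Delta-2}{n\,|L(w)|}$, which yields $\E[T] \leq \frac{n\,|L(w)|}{|L(w)|-\Delta-2}$. Since $|L(w)| \leq q$ and $x \mapsto \frac{x}{x-c}$ is decreasing for $x>c$, this is in general \emph{larger} than $\frac{nq}{q-\Delta-2}$, not smaller, so you have not literally established the constant written in the theorem statement. The $O(n)$ conclusion is unaffected because $|L(w)| \geq \lambda^{*}\Delta$ keeps the denominator bounded away from zero, and $O(n)$ is all that is used in \cref{lem:flipdistbound}. (Incidentally, you are slightly undercounting good colors: $c'=a$ and $c'=b$ both force coalescence since neither is a neighbor color of $w$, so you could take $|L(w)|-\Delta$ rather than $|L(w)|-\Delta-2$.)
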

With these tools in hand, we may now finally prove \cref{lem:flipdistbound} and complete the proof of \cref{thm:coloringspecindmain}.
\begin{proof}[Proof of \cref{lem:flipdistbound}]
First, note that the path generating set $S$ generates the Hamming metric $d_{H}(\cdot,\cdot)$ on proper list-colorings. Now, given the variable-length path coupling furnished by \cref{thm:coloringvarlengthcoupling}, we use \cref{thm:variabletomultistep} to construct an $M$-step coupling with $M = \lceil \frac{2\beta W}{\alpha} \rceil \leq O(n)$ which contracts with rate $1 - \alpha$ every $M$ steps, where $\alpha$ is is a constant independent of $n$. Under this coupling, for every $k = 0,\dots,M-1$ and every positive integer $j$, we have that
\begin{align*}
    &\E_{\tau^{(jM + k)}, \sigma^{(jM + k)}}\wrapb{d_{H}(\tau^{(jM+k)}, \sigma^{(jM+k)}) \mid \tau^{(k)},\sigma^{(k)}} \\
    &\leq \wrapp{1 - \frac{\alpha}{2}}\E_{\tau^{((j-1)M + k)}, \sigma^{((j-1)M + k)}}\wrapb{d_{H}(\tau^{((j-1)M+k)}, \sigma^{((j-1)M+k)}) \mid \tau^{(k)},\sigma^{(k)}} \\
    &\leq \dots \\
    &\leq \wrapp{1 - \frac{\alpha}{2}}^{j} \cdot d_{H}(\tau^{(k)}, \sigma^{(k)}),
\end{align*}
where $\tau^{(0)} = \tau, \sigma^{(0)} = \sigma$ are arbitrary starting states, which need not be neighbors under $S$. It follows that
\begin{align*}
    \sum_{t=0}^{\infty} \E_{\tau^{(t)},\sigma^{(t)}}\wrapb{d_{H}(\tau^{(t)}, \sigma^{(t)}) \mid \substack{\tau^{(0)} = \tau \\ \sigma^{(0)} = \sigma}} &\leq \sum_{k=0}^{M-1} \sum_{j=0}^{\infty} \E_{\tau^{(jM + k)}, \sigma^{(jM + k)}}\wrapb{d_{H}(\tau^{(jM+k)}, \sigma^{(jM+k)}) \mid \tau^{(k)},\sigma^{(k)}} \\
    &\leq \sum_{k=0}^{M-1} \E\wrapb{d_{H}(\tau^{(k)},\sigma^{(k)}) \mid \tau^{(0)}, \sigma^{(0)}} \sum_{j=0}^{\infty} \wrapp{1 - \frac{\alpha}{2}}^{j} \\
    &= \frac{2}{\alpha} \sum_{k=0}^{M-1} \E\wrapb{d_{H}(\tau^{(k)},\sigma^{(k)}) \mid \tau^{(0)}, \sigma^{(0)}} \\
    &\leq \frac{2M}{\alpha} d_{H}(\tau^{(0)},\sigma^{(0)}) \tag*{$(\ast)$} \\
    &\leq O(n) \cdot d_{H}(\tau^{(0)}, \sigma^{(0)}).
\end{align*}
To justify $(\ast)$, note that $T$ is the first time the Hamming distance changes, and that each time the Hamming distance changes, the expected Hamming distance contracts by a factor of $1 - \alpha$.
\end{proof}

\section{Future Directions}
Two concrete open problems are to bring down the required number of colors from $\wrapp{\frac{11}{6} - \epsilon}\Delta$ to $\Delta + 2$, and to remove the bounded-degree assumption, both in this work and in \cite{CLV21}. Another interesting question is if spectral independence implies any useful notion of correlation decay, such as strong spatial mixing, or the absence of zeros for partition function in a large region. This is relevant particularly for proper list-colorings, where we showed spectral independence when $q \geq \wrapp{\frac{11}{6} - \epsilon}\Delta$, but correlation decay and absence of zeros are both open in general when number of colors is below $2\Delta$.

We also reiterate that one feature of our approach which we haven't exploited is that in order to obtain $O(1)$-spectral independence, it suffices for the Markov chain admitting the nice coupling to merely update $O(1)$-coordinates in a single move ``on average'', as opposed to the worst-case starting state; see \cref{rem:localflexible}. We leave it to future work to see if this can be exploited.

\pagebreak
\printbibliography
\pagebreak
\begin{appendices}
\section{Variance and Entropy Decay}\label{app:varent}
While we primarily use prior results on the spectral gap, and standard and modified log-Sobolev constants as blackboxes, to keep this paper self-contained, we define these constants here, and state their relevance to mixing and concentration. Fix a Markov kernel $P$ on a finite state space $\Omega$ which reversible w.r.t. a distribution $\pi$. We may define an inner product using $\pi$ by $\langle f,g \rangle_{\pi} = \E_{\pi}[fg]$. This inner product together with the kernel $P$ induces a positive semidefinite quadratic form known as the Dirichlet form, defined as $\mathcal{E}_{P}(f,g) = \langle f, (I - P)g \rangle_{\pi}$. The variance of a function $f:\Omega \rightarrow \R$ is given by $\var_{\pi}(f) = \E_{\pi}(f^{2}) - \E_{\pi}(f)^{2}$, while the entropy of a function $f:\Omega \rightarrow \R_{\geq0}$ is given by $\Ent_{\pi}(f) = \E_{\pi}(f \log f) - \E_{\pi}(f) \log \E_{\pi}(f)$.

With these notions in hand, we may now define the following constants:
\begin{flalign}\label{eq:mixingconstants}
\begin{split}
    \text{(Spectral Gap)} \quad\quad \lambda(P) &\overset{\defin}{=} \inf_{f \neq 0} \frac{\mathcal{E}(f,f)}{\var_{\pi}(f)} \\
    \text{(Modified Log-Sobolev Constant)} \quad\quad \rho(P) &\overset{\defin}{=} \inf_{f \geq 0} \frac{\mathcal{E}(f, \log f)}{\Ent_{\pi}(f)} \\
    \text{(Standard Log-Sobolev Constant)} \quad\quad \kappa(P) &\overset{\defin}{=} \inf_{f \geq 0} \frac{\mathcal{E}(\sqrt{f},\sqrt{f})}{\Ent_{\pi}(f)}.
\end{split}
\end{flalign}
It is known that $4\kappa(P) \leq \rho(P) \leq 2\lambda(P)$ \cite{BT03}, with lower bounds on $\kappa(P)$ being the most difficult to establish. For the reader's convenience, we collect some well-known relations between these constants, mixing, and concentration.
\begin{proposition}[Mixing and Concentration]
We have the following bounds on the mixing time of a Markov chain with transition probability matrix $P$ and stationary distribution $\pi$.
\begin{align*}
    t_{\mix}(\epsilon) &\leq \frac{1}{\lambda(P)} \wrapp{\frac{1}{2}\log \frac{1}{\pi_{\min}} + \log \frac{1}{2\epsilon}} \tag*{\cite{LPW17}} \\
    t_{\mix}(\epsilon) &\leq \frac{1}{\rho(P)} \wrapp{\log\log \frac{1}{\pi_{\min}} +\log \frac{1}{2\epsilon^{2}}} \tag*{\cite{BT03}} \\
    t_{\mix}(\epsilon) &\leq \frac{1}{4\kappa(P)} \wrapp{\log\log \frac{1}{\pi_{\min}} + \log \frac{1}{2\epsilon^{2}}} \tag*{\cite{DS96}}.
\end{align*}
Furthermore, for every function $f:\Omega \rightarrow \R$ which is $1$-Lipschitz w.r.t. graph distance under $P$, we have the following Chernoff-type concentration inequalities \cite{Goe04, Sam05, BLM16}.
\begin{align*}
    \Pr[f \geq \E_{\pi}(f) + \epsilon] &\leq \exp\wrapp{-\frac{\rho(P) \epsilon^{2}}{2 v(f)}} \\
    \Pr[f \geq \E_{\pi}(f) + \epsilon] &\leq \exp\wrapp{-\frac{\kappa(P) \epsilon^{2}}{2v(f)}},
\end{align*}
where
\begin{align*}
    v(f) \overset{\defin}{=} \max_{x \in \Omega}\wrapc{\sum_{y \in \Omega} P(x \rightarrow y) \cdot (f(x) - f(y))^{2}}.
\end{align*}
\end{proposition}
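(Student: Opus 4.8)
The plan is to treat this proposition as a compendium of classical facts and assemble each inequality from its standard argument, citing the indicated references while reproducing the short derivations. \textbf{Spectral gap.} For reversible $P$ I would use $L^{2}(\pi)$-contraction: writing $f_{x} = \mathbb{I}_{x}/\pi(x)$, one has $\norm{P^{t}f_{x} - 1}_{L^{2}(\pi)} \leq (1-\lambda(P))^{t}\norm{f_{x}-1}_{L^{2}(\pi)} \leq (1-\lambda(P))^{t}\sqrt{1/\pi(x)}$, and $2\,d_{\TV}(\mathbb{I}_{x}P^{t},\pi) = \norm{P^{t}f_{x}-1}_{L^{1}(\pi)} \leq \norm{P^{t}f_{x}-1}_{L^{2}(\pi)}$ by Cauchy--Schwarz. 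Using $1-\lambda \leq e^{-\lambda}$ and solving for the first $t$ for which the right-hand side is at most $\epsilon$ gives the claimed bound; a possibly very negative smallest eigenvalue is handled by lazifying the chain or by passing to the absolute spectral gap, which only affects constants. This is the argument of \cite{LPW17}.

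\textbf{Log-Sobolev constants.} For the modified log-Sobolev bound I would use that $\rho(P)$ controls exponential decay of relative entropy, $\Ent_{\pi}(P^{t}f) \leq e^{-\rho(P)t}\Ent_{\pi}(f)$ for $f \geq 0$ (with the precise discrete-time statement and constant as in \cite{BT03}). Applying this with $f = \mathbb{I}_{x}/\pi(x)$, so that $\Ent_{\pi}(f) = \log(1/\pi(x)) \leq \log(1/\pi_{\min})$ while $\Ent_{\pi}(P^{t}f)$ is the Kullback--Leibler divergence of $\mathbb{I}_{x}P^{t}$ from $\pi$, Pinsker's inequality gives $d_{\TV}(\mathbb{I}_{x}P^{t},\pi)^{2} \leq \frac{1}{2}e^{-\rho(P)t}\log(1/\pi_{\min})$, and solving $d_{\TV} \leq \epsilon$ yields the stated estimate. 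The standard log-Sobolev bound then follows at once from the comparison $4\kappa(P) \leq \rho(P)$: this forces $1/(4\kappa(P)) \geq 1/\rho(P)$ and so makes the $\kappa$-bound a weakening of the $\rho$-bound (alternatively one reproduces the hypercontractivity argument of \cite{DS96}).

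\textbf{Concentration.} Here I would run the Herbst argument adapted to the discrete Dirichlet form. Fix a $1$-Lipschitz $f$ and $s > 0$, and put $H(s) = \E_{\pi}[e^{sf}]$. The crux is an upper bound on $\mathcal{E}(e^{sf}, sf)$: from the elementary inequality $(e^{a}-e^{b})(a-b) \leq \frac{1}{2}(a-b)^{2}(e^{a}+e^{b})$ together with reversibility of $P$ (used to symmetrize the double sum and turn $\sum_{x,y}\pi(x)P(x,y)(f(x)-f(y))^{2}e^{sf(y)}$ into $\sum_{x}\pi(x)e^{sf(x)}\sum_{y}P(x,y)(f(x)-f(y))^{2} \leq v(f)H(s)$), one gets $\mathcal{E}(e^{sf},sf) \leq \frac{s^{2}}{2}v(f)H(s)$. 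Substituting this into the modified log-Sobolev inequality $\rho(P)\Ent_{\pi}(e^{sf}) \leq \mathcal{E}(e^{sf},\log e^{sf}) = \mathcal{E}(e^{sf},sf)$ and writing $K(s) = \frac{1}{s}\log H(s)$, the inequality becomes $K'(s) \leq v(f)/(2\rho(P))$; integrating from $0$, where $K(0^{+}) = \E_{\pi}f$, gives $\log H(s) \leq s\,\E_{\pi}f + \frac{v(f)}{2\rho(P)}s^{2}$, and a Chernoff bound optimized at $s = \rho(P)\epsilon/v(f)$ produces $\exp(-\rho(P)\epsilon^{2}/(2v(f)))$. The $\kappa$-version again follows from $4\kappa(P) \leq \rho(P)$, or directly from \cite{Goe04, Sam05, BLM16}.

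\textbf{The main obstacle.} None of these steps is deep, but the one requiring genuine care is the Dirichlet-form estimate $\mathcal{E}(e^{sf},sf) \leq \frac{s^{2}}{2}v(f)H(s)$ in the concentration part: one must select the right elementary pointwise inequality, apply reversibility correctly in the symmetrization, and keep track of constants so that they match the (deliberately non-optimized) bounds as stated. The remainder is routine bookkeeping with Pinsker's and Cauchy--Schwarz inequalities and with the known ordering $4\kappa(P) \leq \rho(P) \leq 2\lambda(P)$.
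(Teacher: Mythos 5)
Your proposal is correct and follows essentially the intended route: the paper gives no proof of this proposition, treating it as a collection of known results from the cited references, and your reconstructions (the $L^{2}$/Cauchy--Schwarz argument for the spectral-gap bound, entropy decay plus Pinsker for the modified log-Sobolev bound, the Herbst argument with the symmetrized Dirichlet-form estimate $\mathcal{E}(e^{sf},sf) \leq \tfrac{s^{2}}{2}v(f)\E_{\pi}[e^{sf}]$ for concentration, and the ordering $4\kappa(P) \leq \rho(P)$ to downgrade to the $\kappa$-versions) are exactly the standard arguments those references contain. The two places you hedge --- a possibly negative smallest eigenvalue (handled by laziness or the absolute spectral gap, and moot for the positive-semidefinite down-up walk) and the precise discrete-time form of entropy decay from the modified log-Sobolev constant --- are indeed the only points requiring care, and they are dealt with in \cite{LPW17} and \cite{BT03} respectively.
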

\end{appendices}
\end{document}